\patchcmd{\maketitle}{\@copyrightspace}{}{}{}
\ifpdf \setlength{\pdfpageheight}{11in}\setlength{\pdfpagewidth}{8.5in}\fi
\newcommand{\todo}[1]{{\bf \textcolor{blue}{[*** #1 ***]}}}
\newcommand{\note}[1]{{\bf \textcolor{red}{#1}}}
\newcommand{\comments}[1]{}
\newcommand{\floor}[1]{\left\lfloor #1 \right\rfloor}
\newcommand{\ceil}[1]{\left\lceil #1 \right\rceil}
\newcommand{\abs}[1]{\left\lvert #1 \right\rvert}
\renewcommand{\vec}[1]{{\bf #1}}
\newcommand{\hpref}{\succ_h}
\newcommand{\hprefeq}{\succeq_h}
\let\oldchi\chi
\renewcommand{\chi}{{\raisebox{2pt}{$\oldchi$}}}  % the citation profile of an author
\newcommand{\Haug}{\widetilde{H}}  % the h-augmenting profile of an author
\DeclareMathOperator{\cit}{cit}  % the total number of citations received by a paper
\DeclareMathOperator{\Util}{Util}  % the utility function
\newtheorem{theorem}{Theorem}
\newtheorem{lemma}[theorem]{Lemma}
\newtheorem{proposition}[theorem]{Proposition}
\newenvironment{itemize*}
	{\vspace{-6pt} \begin{itemize} \setlength{\itemsep}{1pt} \setlength{\parskip}{0pt} \setlength{\parsep}{0pt}}
	{\end{itemize} \vspace{-4pt}}
\newenvironment{enumerate*}[1][]
	{\vspace{-6pt} \begin{enumerate}[#1] \setlength{\itemsep}{1pt} \setlength{\parskip}{0pt} \setlength{\parsep}{0pt}}
	{\end{enumerate} \vspace{-4pt}}
\begin{document}
%
% --- Author Metadata here ---
\conferenceinfo{WWW Workshop on Big Scholarly Data}{'14 Seoul, Korea}
%\CopyrightYear{2007} % Allows default copyright year (20XX) to be over-ridden - IF NEED BE.
%\crdata{0-12345-67-8/90/01}  % Allows default copyright data (0-89791-88-6/97/05) to be over-ridden - IF NEED BE.
% --- End of Author Metadata ---

\title{Modeling Collaboration in Academia: \\ A Game Theoretic Approach}
%\title{Modeling Collaboration in Academia: \\ A Game Theoretic Approach \\ {\Large (\today)}}
%\title{Modeling Collaboration in Academia: \\ A Game Theoretic Approach \\ {\Large (\today~~@~~\currenttime)}}

\numberofauthors{2}
\author{
% 1st. author
\alignauthor Qiang Ma, S. Muthukrishnan, Brian Thompson\\
       \affaddr{Rutgers University}\\
       \affaddr{Dept. of Computer Science}\\
       \affaddr{Piscataway, NJ 08854, USA}\\
       \email{\{qma,muthu\}@cs.rutgers.edu, bthompso8784@gmail.com}
\alignauthor Graham Cormode\\
       \affaddr{University of Warwick}\\
       \affaddr{Dept. of Computer Science}\\
       \affaddr{Coventry CV4 7AL, UK}\\
       \email{g.cormode@warwick.ac.uk}
%% 2nd. author
%\vspace{0.1in}
%\and  % use '\and' if you need 'another row' of author names
%% 3rd. author
%\alignauthor S. Muthukrishnan\\
%       \affaddr{Rutgers University}\\
%       \affaddr{Dept. of Computer Science}\\
%       \affaddr{Piscataway, NJ 08854, USA}\\
%       \email{smewtoo@gmail.com}
% 4th. author
%\alignauthor Brian Thompson\\
%       \affaddr{Rutgers University}\\
%       \affaddr{Dept. of Computer Science}\\
%       \affaddr{Piscataway, NJ 08854, USA}\\
%       \email{bthompso8784@gmail.com}
}

\date{}

\maketitle

\begin{abstract}

In this work, we aim to understand the mechanisms driving academic collaboration. We begin by building a model for how researchers split their effort between multiple papers, and how collaboration affects the number of citations a paper receives, supported by observations from a large real-world publication and citation dataset, which we call the {\em h-Reinvestment model}. Using tools from the field of Game Theory, we study
%the collaborations that arise as the result of interplay between reseachers' individually-motivated behaviors.
%the incentives and strategies for
researchers' collaborative behavior over time under this model, with the premise that each researcher wants to maximize his or her academic success. We find analytically that there is a strong incentive to collaborate rather than work in isolation, and that studying collaborative behavior through a game-theoretic lens is a promising approach to help us better understand the nature and dynamics of academic collaboration.

\end{abstract}

\section{Introduction}
\label{sec:intro}

Researchers exhibit a wide range of work habits and behaviors. Some work on many papers simultaneously, while others focus on only a few at a time. Some engage in mentoring relationships, while others choose to collaborate mostly with their peers, and still others prefer to work independently. These behaviors may be motivated by a variety of factors such as institutional needs, academic field, stage in career, funding situation, and affinity for teaching. We pose the question: ``If researchers were motivated by $X$, what would the world of academic research look like?'' In the current work, we analyze collaborative behavior in a large scholarly dataset and arrive at a model of academic collaboration, which we call the {\em h-Reinvestment model}. We then formalize a game based on this model, and study the outcome of the game when each researcher tries to optimize a given objective function.

Our first result is that two researchers perform asymptotically better by collaborating than by publishing only independent work. That is,
collaboration is preferable to isolation (given our assumptions). Our second result is that when researchers are constrained to following the same strategy every year, the best outcome is when they arrange themselves into (stable) pairs to work together. Yet when the researchers are allowed to change their strategies over time, this scheme no longer represents a stable equilibrium. This highlights an important limitation of the existing literature, most of which is based on models where collaboration strategies remain constant over time. Our model and approach provide a new framework for further study, which can help us to better understand the dynamics of collaborative systems. Our main contributions are:
\begin{trivlist}
\item $\bullet$~The {\em h-Reinvestment model}, an academic collaboration model supported by detailed analysis of publication data
\item $\bullet$~The {\em Academic Collaboration game}, where researchers collaborate to maximize their academic success
\item $\bullet$~An analysis of collaboration strategies and game equilibria
\end{trivlist}

\subsection{Related Work}
\label{sec:related}

Some existing work studies the academic success of a researcher over time. While introducing the h-index, Hirsch suggested a model in which a researcher publishes a constant number of papers per year, and each paper receives a constant number of additional citations per year, which results in linear growth of the h-index~\cite{Hirsch@PNAS05}.
%Other work has further studied the growth of the h-index over time, through simulation models or empirical studies. Wu et al.\ track the h-indices of 47 Nobel Prize winners as functions parameterized by time and categorize them into five shapes: linear, convex, concave, S-shaped, and IS-shaped~\cite{WLH@JOI11}.
Guns and Rousseau suggested a peak-decay citation model, and showed through simulations that by varying the parameters -- or choosing them stochastically -- growth of the h-index can be linear, concave, or S-shaped~\cite{GR@JASIS09}. Kleinberg and Oren proposed a game-theoretic model to analyze how researchers choose which open problems to work on, and how credit gets attributed when multiple researchers solve the same problem~\cite{KO@STOC11}. None of this previous work models collaboration explicitly.

There has been effort in recent years to design bibliographic metrics that take collaboration into account \cite{AAH@Scientometrics10,KPMVD@CIKM10,CMMT@JOI13}.
%Abbasi et al.\ proposed an index that rewards an author for collaborating with top researchers~\cite{AAH@Scientometrics10}. Kameshwaran et al.\ defined a metric combining strength of publication record with eigenvector centrality to identify prominent researchers in the collaboration network~\cite{KPMVD@CIKM10}. In previous work, we proposed the Social h-index, which attributes partial credit for a researcher's success to the coauthors whose joint work contributed to that success~\cite{CMMT@JOI13}.
All of these approaches model the academic environment as a static graph representing co-author relationships. However, in real life, a researcher's behavioral patterns may change over time. We suggest that more sophisticated models are needed to understand the intricacies of collaborative behavior. Cardillo et al.\ empirically study the correlation between stability of local graph structure over time and the willingness of individuals to compromise their own interests in favor of social cooperation, but stop short of suggesting a mechanism that would explain such behavior~\cite{CPNSGL@arXiv13}.

\section{Methodology}
\label{sec:method}

We introduce some  terminology and notation, summarized in 
Table~\ref{table:notation}.
%Next, we consider the question of how to measure a researcher's academic impact, and suggest several options that reward different behaviors.
Using a game-theoretic framework, we then describe a game of academic collaboration with which we can simulate researchers' collaborative behavior over time.

\subsection{Preliminaries}
\label{sec:collab-prelim}

\begin{table}[!t]
	\centering
	\begin{tabular}{|c|l|}
	\hline
	\textbf{Notation} & \textbf{Description} \\ \hline
	$\cit(p)$ & the total \# of citations received by paper $p$ \\ \hline
%	$\cit_y(p)$ & the \# of citations received by paper $p$ in year $y$ \\ \hline
	$A(p)$ & the set of authors of paper $p$ \\ \hline
	$P(a)$ & the set of papers authored by $a$ \\ \hline
	$P_y(a)$ & the set of papers authored by $a$ in year $y$ \\ \hline
	$\chi_y(a)$ & the citation profile of researcher $a$ in year $y$ \\ \hline
	$h_y(a)$ & the h-index of researcher $a$ in year $y$ \\ \hline
	$H_y(a)$ & the h-profile of researcher $a$ in year $y$ \\ \hline
	$\Haug_y(a)$ & the h-augmenting profile of researcher $a$ in year $y$ \\ \hline
	\end{tabular}
	\caption{Table of basic notation}
	\label{table:notation}
\end{table}

We define the {\em citation profile} of a set of papers $P$, denoted
$\chi(P)$, to be the multi-set $\{\cit(p) : p \in P\}$; and the
citation profile of a researcher $a$ to be $\chi(a) = \chi(P(a))$. 
%When multiple years are being considered, we use 
Then $\chi_y(a)$ denotes the citation profile of researcher $a$ in year $y$.

We define the {\em h-index} (generalizing Hirsch~\cite{Hirsch@PNAS05}) of a multi-set of non-negative integers $Z$, denoted $h(Z)$, to be the largest integer $h$ such that at least $h$ elements of $Z$ are greater than or equal to $h$:
$$h(Z) = \max \left\{ h : \abs{\{z \in Z,\ z \geq h\}} \geq h \right\}.$$
For simplicity of notation, 
%we define the h-index of a set of papers $P$ to be $h(P) = h(\chi(P))$; and the h-index of a researcher $a$ to be $h(a) = h(P(a)) = h(\chi(P(a)))$. When multiple years are being considered, $h_y(a)$ is the h-index of researcher $a$ in year $y$.
we let $h(a) = h(P(a)) = h(\chi(P(a)))$, and let $h_y(a)$ denote the h-index of researcher $a$ in year $y$.

We define the {\em h-profile} of a multi-set of non-negative integers $Z$, denoted $H(Z)$, to be the multi-set of integers in $Z$ that are greater than or equal to $h(Z)$:
$$H(P) = \{z \in Z : z \geq h(Z)\}.$$
We similarly define
%the h-profile of a set of papers $P$ to be $H(P) = H(\chi(P))$; and the h-profile of a researcher $a$ to be $H(a) = H(P(a)) = H(\chi(P(a)))$. When multiple years are being considered, we use $H_y(a)$ to denote the h-profile of researcher $a$ in year $y$.
$H(a) = H(P(a)) = H(\chi(P(a)))$, and let $H_y(a)$ denote the h-profile of researcher $a$ in year $y$.

Sometimes we are only interested in the papers with strictly more than $h$ citations. We define the {\em h-augmenting profile} of a multi-set of non-negative integers $Z$, denoted $\Haug(Z)$, to be the multi-set of integers in $Z$ that are strictly greater than $h(Z)$:
$$\Haug(P) = \{z \in Z : z > h(Z)\}.$$
We similarly define
%the h-augmenting profile of a set of papers $P$ as $\Haug(P) = \Haug(\chi(P))$; and that of a researcher $a$ to be $\Haug(a) = \Haug(P(a)) = \Haug(\chi(P(a)))$. When multiple years are being considered, $\Haug_y(a)$ denotes the h-augmenting profile of researcher $a$ in year $y$.
$\Haug(a) = \Haug(P(a)) = \Haug(\chi(P(a)))$, and let $\Haug_y(a)$ denote the h-augmenting profile of researcher $a$ in year $y$. Intuitively, the h-augmenting profile indicates progress towards increasing the h-index.

Let $Z$ and $Z'$ be multi-sets of non-negative integers. We say $Z$ is {\em weakly h-preferable} to $Z'$, denoted $Z \hprefeq Z'$, if $h(Z) \geq h(Z')$ and $(\forall\  z_0 > h(Z))\ \abs{\{z \in Z : z \geq z_0\}} \geq \abs{\{z \in Z' : z \geq z_0\}}$. We say $Z$ is {\em strongly h-preferable} to $Z'$, denoted $Z \hpref Z'$, if in addition either $h(Z) > h(Z')$ or $\exists\ z_0 > h(Z)$ for which the inequality is strict. When $P$ and $P'$ are two sets of papers, we write $P \hprefeq P'$ to denote that $\chi(P) \hprefeq \chi(P')$, and $P \hpref P'$ to denote that $\chi(P) \hpref \chi(P')$.

%Next, we propose a model with which to study academic collaboration over time.
Next, we propose a model of academic collaboration.

\subsection{The h-Reinvestment Model}
\label{sec:collab-model}

We investigate how researchers distribute their effort between multiple papers, and the correspondence between a paper's success and the effort invested in it by its coauthors, by analyzing a large corpus of Computer Science publications.
%In Section~\ref{sec:collab-game}, we proposed a game-theoretic model of academic collaboration in which in year $y$, each researcher $a$
%has a fixed amount of research potential $Q_y(a)$ to be invested in writing papers, and the total number of citations that a paper
%receives reflects the amount of research potential that was invested in the paper by its authors.
We extract all publications, along with authors and number of citations received, from a snapshot of the DBLP database, which contains approximately 1 million researchers and 2 million publications. The size and variety of this data mean that it is possible to validate
and calibrate our model from this dataset.
%We first extract all publications, along with authors, number of citations received, and year of publication, from a snapshot of the DBLP database, which contains approximately 1 million researchers and 2 million publications. Next, we query Google Scholar for the citation information, which gives us both the cross-citations between publications and the year in which each citation was received.
Empirical observations lead us to a model where in each year $y$, each researcher $a$ has some amount of {\em research potential} $Q_y(a)$ to be invested in writing papers, which is proportional to his or her academic success up to that point.

We first analyze the simple case of a paper published by a single author who had no other publications that same year,\footnote{The assumption is that if a researcher published only one paper in a given year, then all of her effort went into that paper. In reality, she could have worked on projects that were not published that year, but that is hard to evaluate empirically since unpublished papers are not captured in the data.} and explore the relationship between the number of citations a paper receives and several attributes of the author: number of papers published previously, total number of citations received previously, and current h-index. We compute Spearman's rank correlation coefficient for each of the attributes,\footnote{We choose this over the more common Pearson's coefficient because it is more robust to non-linear relationships.} and find that the h-index has the highest correlation with a value of 0.34, compared to paper count with a value of 0.28 and citation sum with a value of 0.08. Therefore, in subsequent analysis, we use the h-index as a proxy for the research potential of an author.

In Figure~\ref{fig:model-single-author}, we take a closer look at the relationship between the h-index of the author and the number of citations a paper receives. The plot shows the median number of citations received on papers singly-authored by a researcher with h-index $h$ for each value of $h$. We use the median because there are a few extreme outliers which skew the average to the right, and we are looking for a model which represents a typical researcher. Comparison to the best-fit line demonstrates visually that the two quantities have a linear relationship up until an h-index of about 10, indicating that the number of citations a single-authored paper receives is proportional to the h-index of the author when he puts all of his effort into the paper. For values of $h > 10$, the fluctuation may be a result of high variance and too few data points.

\begin{figure}[t]
\centering
%	\subfloat[Single author model]{
%	\includegraphics[height=1.2in]{../fig/diagram-single-author.eps}
%		\label{fig:diagram-single-author}}
%	\subfloat[H-index of authors against median number of
%          citations for single author papers]{
		\includegraphics[width=0.92\columnwidth]{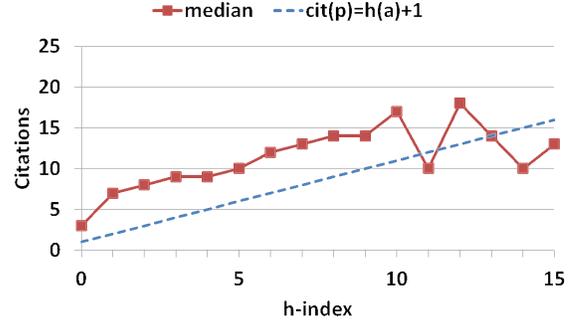}
%		\label{fig:plot-single-author}}
%	\caption{Calibrating the single author model}
\caption{The h-index of the author versus the median number of citations received across all single-authored papers for which the author published no other papers the same year.}
	\label{fig:model-single-author}
\end{figure}

Next, we look at the case of papers with multiple authors. To isolate this aspect of the model, we consider two-author papers where neither of the authors published any other papers in the same year.
%, as illustrated in Figure~\ref{fig:model-two-authors}\subref{fig:diagram-two-authors}.
In Figure~\ref{fig:model-two-authors},
%\subref{fig:plot-two-authors}, 
we plot the sum of the h-indices of the authors versus the median number of citations received across all such papers. We again observe
a linear relationship, supporting that the combined research potential of multiple authors is {\em additive} when they put all of their effort into the paper.

\begin{figure}[t]
	\centering
%	\subfloat[]{
%		\includegraphics[height=1.5in]{../fig/diagram-two-authors.eps}
%		\label{fig:diagram-two-authors}}
%	\subfloat[]{
		\includegraphics[width=0.92\columnwidth]{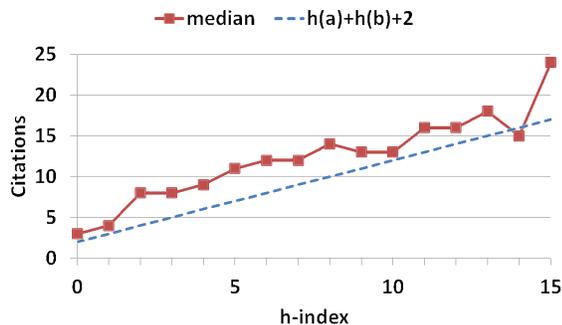}
		\label{fig:plot-two-authors}%}
	\caption{The sum of the h-indices of the coauthors versus the median number of citations received across all two-authored papers where neither author published any other papers in the same year. The dashed line is the number of citations predicted by our model.}
	\label{fig:model-two-authors}
\end{figure}

Finally, we investigate what happens when an author publishes multiple papers in the same year by narrowing our focus to instances where
aside from the author of interest, none of the coauthors published any other papers in the same year. 
%This scenario is illustrated in Figure~\ref{fig:model-one-author-two-papers}\subref{fig:diagram-one-author-two-papers}. 
Using the previous result of research potential being additive across multiple coauthors, we plot the h-index of the author of interest against the value
$$\textstyle{\sum_{p \in P_y(a)} \Big( \cit(p) - \sum_{b \in A(p) \backslash \{a\}} h(b) \Big)}$$
in Figure~\ref{fig:model-one-author-two-papers}.
%\subref{fig:plot-one-author-two-papers}. 
The plot shows a linear relationship, suggesting that the allocation of an author's research potential across multiple papers is also linear.

\begin{figure}[t]
	\centering
%	\subfloat[]{		\includegraphics[height=1.5in]{../fig/diagram-one-author-two-papers.eps}
%		\label{fig:diagram-one-author-two-papers}}
%	\subfloat[]{
	\includegraphics[width=0.92\columnwidth]{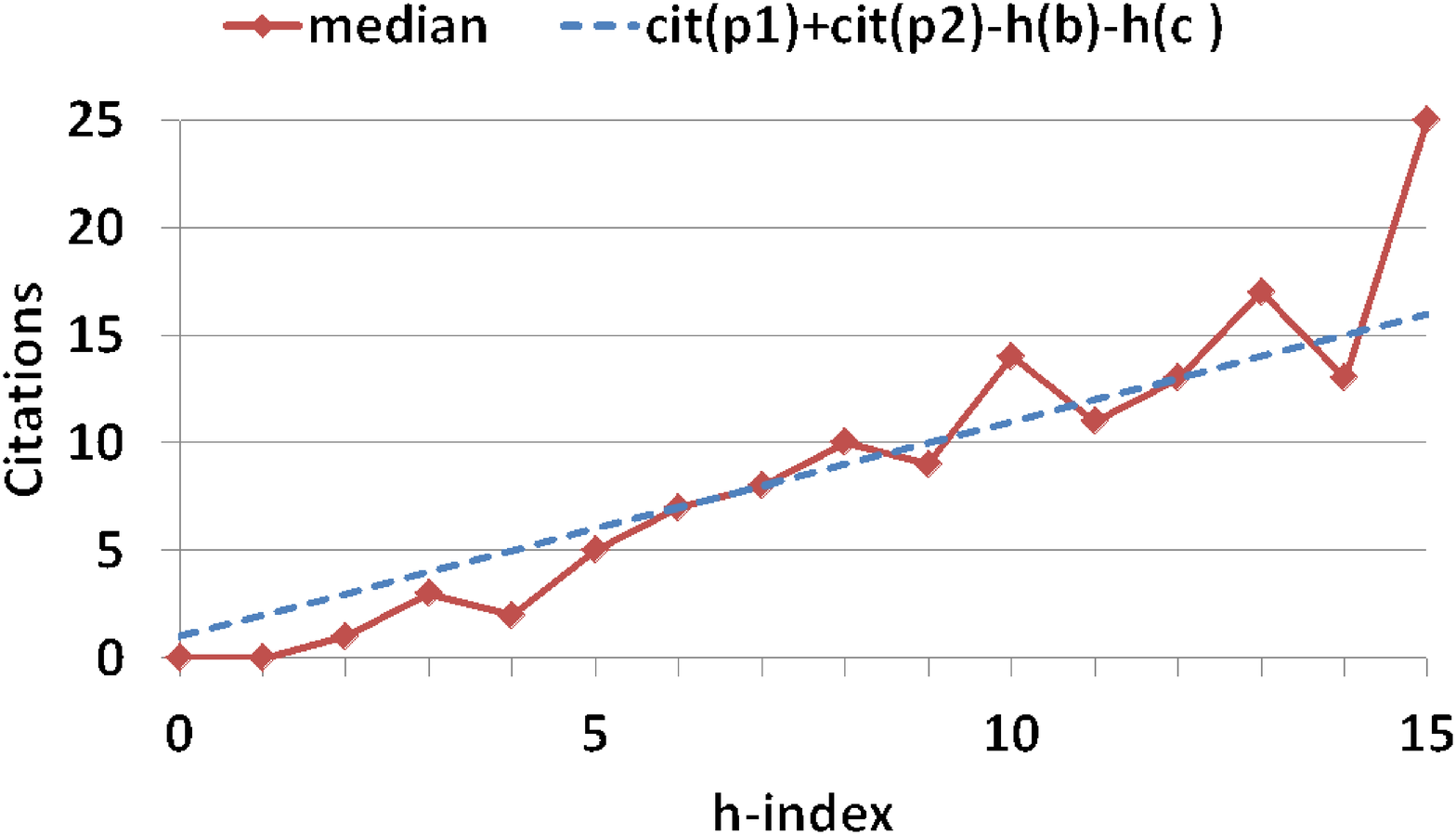}
		\label{fig:plot-one-author-two-papers}%}
	\caption{The h-index of an author $a$ who published two papers $p_1$ and $p_2$ in the same year with coauthors $b_1$ and $b_2$ respectively, neither of which published any other papers in the same year, versus the median of $(\cit(p_1) - h(b_1)) + (\cit(p_2) - h(b_2))$.}
	\label{fig:model-one-author-two-papers}
\end{figure}

We formalize the \textit{h-Reinvestment model} based on the observations above:
\begin{enumerate}
\addtolength{\itemsep}{-1ex}
\item In year $y$, a researcher $a$ has $Q_y(a) = h_y(a) + 1$ units of research potential to be invested in writing papers.\footnote{The purpose of the $+1$ is to signify that an author working independently will continue to make progress.}
\item Each researcher distributes her research potential between some number of papers to be published that year.
%$$\textstyle{Q_y(a) = \sum_{p \in P_y(a)} q(a,p),}$$
%where $q(a,p)$ is the amount of research potential invested by researcher $a$ in paper $p$.
\item A paper $p$ will receive citations equal to the sum of the research potential invested by its coauthors.
%in total $\cit(p) =
%\cit\(\{q(a,p)\}_{a \in A(p)}\) =
%\sum_{a \in A(p)} q(a,p)$ citations.
\end{enumerate}

\subsection{The Academic Collaboration Game}
\label{sec:collab-game}
We appeal to the field of Game Theory, and define a game based on the h-Reinvestment model in Section~\ref{sec:collab-model}.
%We model the collaborative behavior of researchers in academia with the tools of Game Theory. We consider a model where in year $y$,
%each researcher $a$ has a fixed amount of {\em research potential} $Q_y(a)$ to be invested in writing papers, and the total number of
%citations that a paper receives reflects the amount of research potential that was invested in the paper by its authors. 
For simplicity of analysis, we model all citations as being received in the same year that the paper is published. We also assume that there is a practical limit on the number of coauthors that can meaningfully contribute to a paper, and in the following analysis limit a paper to two coauthors. Future work is to revisit the analysis under more general models.

A {\em game} is a way of modeling the decisions of a set of rational {\em players} whose {\em actions} collectively determine an {\em outcome}. A player's goal is to achieve an outcome of maximal {\em utility} to that player. We model collaboration in academia as a {\em repeated game}, where the same base game is played multiple times, and in each iteration players choose actions simultaneously.

We formalize a repeated game played by a set of researchers, explicitly defining the actions available to each researcher in each year, the outcomes determined by those actions, and the utility of each possible outcome to each researcher. We refer to this as the {\em Academic Collaboration (AC) game}:
\begin{trivlist}
\item 
$\bullet$~{\bf Players:} Let $A$ be a set of researchers, each $a \in A$ initially having published a set of papers resulting in citation profile $\chi_0(a)$.
\item 
$\bullet$~{\bf Actions:} In year $y$, each researcher $a \in A$ has $Q_y(a)$ units of research potential to allocate among individual and collaborative papers. Formally, $a$ constructs a finite sequence of non-negative integers $\vec{q_y^a}$, and for each potential coauthor $a' \in A$ a sequence $\vec{q_y^{a,a'}}$, such that
$$\sum_i \vec{q_y^a}[i] + \sum_{a'} \sum_i \vec{q_y^{a,a'}}[i] = Q_y(a).$$
\item 
$\bullet$~{\bf Outcome:}
%Each paper produced receives citations commensurate with the research potential invested by its coauthors.
For each $\vec{q_y^a}[i] > 0$ and each $\vec{q_y^{a,a'}}[i] +
\vec{q_y^{a',a}}[i] > 0$, a paper is produced that receives citations
equal to the total of the research potential invested by its coauthors. A researcher $a$ becomes a coauthor on a paper $p$ by investing a non-zero amount of research potential in it.
%Formally: Let $\cit$ be the citation function, which maps a non-empty multi-set $\{q(a,p)\}_{a \in A'}$ consisting of the research potential invested in a project by its coauthors $A' \subseteq A$ to the number of citations the resulting paper $p$ will receive. For $i > |\vec{q}|$, define $\vec{q}[i] = 0$. For all $a,i$ such that $\vec{q_y^a}[i] > 0$, a paper will be published which will receive $\vec{q_y^a}[i]$ citations, singly-authored by $a$. For all $\{a,a'\} \in {A \choose 2}$ and $i$ such that $\vec{q_y^{a,a'}}[i] + \vec{q_y^{a',a}}[i] > 0$, a paper will be published which will receive $\vec{q_y^{a,a'}}[i] + \vec{q_y^{a',a}}[i]$ citations, for which $a$ (resp. $a'$) is a coauthor if and only if $\vec{q_y^{a,a'}}[i] > 0$ (resp. $\vec{q_y^{a',a}}[i] > 0$).
\item 
$\bullet$~{\bf Utility:} The function $\Util_y(a) = h_y(a)$ indicates the utility for researcher $a$ at the end of year $y$.
\end{trivlist}

We will consider the AC game of {\em infinite horizon}, which means that each player wants to maximize his utility in the limit, rather than after some pre-specified number of years.\footnote{Although in reality a researcher lives for only a finite number of years, infinite games are arguably a reasonable model of human behavior when ``players examine a long-term situation without assigning a specific status to the end of the world''~\cite{Rubinstein@AET92}.} The Game Theory literature considers several ways to compare player preferences in infinite games. Our approach is most similar to the overtaking criterion presented in~\cite{Rubinstein@JET79}.

The {\em game state} represents, at any point in the game, all information that may help determine the available actions, corresponding outcomes, and utilities of the players. In the AC game, we define the game state to consist of the citation profiles of the researchers.

A {\em strategy} is a set of rules that govern which action a player will take given her knowledge of the game state. In the current work, we only consider deterministic strategies.
%Following the literature, we denote by $S_a$ the set of possible strategies for player $a \in A$.

Let $s$ be a set of strategies for a game, one per player; this is referred to as a {\em strategy profile}. For the purpose of analysis, we take two strategy profiles to be equivalent if they always result in the same outcome. When considering multiple strategy profiles, we denote by $P^s_y(a)$, $\chi^s_y(a)$, $h^s_y(a)$, $H^s_y(a)$, $\Haug^s_y(a)$, and $\Util^s_y(a)$ the papers, citation profile, h-index, h-profile, h-augmenting profile, and utility, respectively, for player $a$ after $y$ iterations of the game when the players follow their respective strategies in $s$; and by $W^s(A)$ the social welfare under $s$. We denote by $s_a$ the strategy for player $a \in A$ under strategy profile $s$,
%\footnote{For convenience, we also use $s_a$ to denote the singleton set containing that strategy; in each use case, the meaning should be clear from context.}
and by $s_{\bar{a}}$ the strategies for all players other than $a$; by $s_{A'}$ the strategies for players in $A' \subseteq A$, and by $s_{\bar{A'}}$ the strategies for players not in $A'$.

Let $f_n$ and $g_n$ be two infinite real-valued sequences. We say that $f_n$ {\em overtakes} $g_n$ if $\limsup\limits_{n \to \infty} f_n - g_n > 0$ and $\liminf\limits_{n \to \infty} f_n - g_n \geq 0$.\footnote{In~\cite{Rubinstein@JET79}, $f_n$ overtakes $g_n$ if $\liminf\limits_{n \to \infty} f_n - g_n > 0$. Our definition additionally allows for the situation in Figure~\ref{fig:overtaking-sequences-yes}.} We note that there are three (mutually exclusive and exhaustive) possibilities, illustrated in Figures~\ref{fig:overtaking-sequences-yes} and~\ref{fig:overtaking-sequences-no}:
\begin{itemize*}
\item $f_n$ overtakes $g_n$
\item $g_n$ overtakes $f_n$
\item neither $f_n$ nor $g_n$ overtakes the other
\end{itemize*}

%\begin{figure}[t]
%	\centering
%	\subfloat[Sequence $f_n$ overtakes $g_n$]{
%		\includegraphics[width=0.24\textwidth]{}
%		\label{fig:overtaking-sequences-yes}}
%	\subfloat[Neither sequence $f_n$ nor $g_n$ overtakes the other.]{
%		\includegraphics[width=0.24\textwidth]{}
%		\label{fig:overtaking-sequences-no}}
%	\caption{Overtaking and non-overtaking sequences}
%	\label{fig:overtaking-sequences}
%\end{figure}

\begin{figure}[t]
	\centering
	\includegraphics[width=0.92\columnwidth]{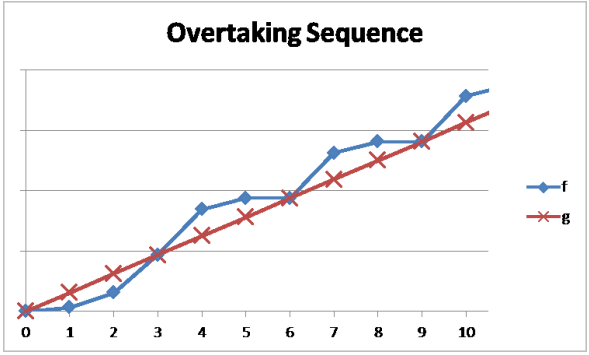}
	\caption{Example of overtaking sequences: $f_n$ overtakes $g_n$.}
	\label{fig:overtaking-sequences-yes}
\end{figure}
	
\begin{figure}[t]
	\centering
	\includegraphics[width=0.92\columnwidth]{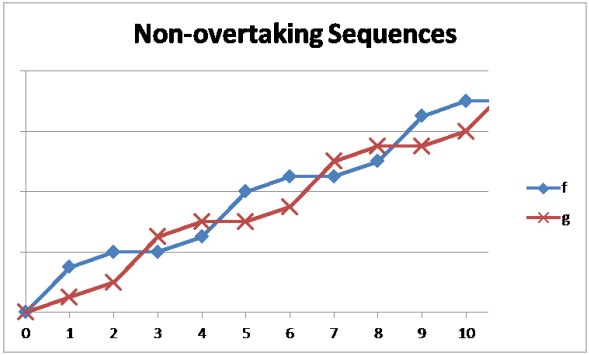}
	\caption{Example of non-overtaking sequences: Neither sequence $f_n$ nor $g_n$ overtakes the other.}
	\label{fig:overtaking-sequences-no}
\end{figure}

%In an infinite game, we say that strategy $s'_a$ for player $a$ {\em weakly dominates} strategy $s''_a$ if there does not exist a set of opponents' strategies $s_{\bar{a}}$ for which $\Util^{s_{\bar{a}} \cup s''_a}_n(a)$ overtakes $\Util^{s_{\bar{a}} \cup s'_a}_n(a)$. If $\Util^{s_{\bar{a}} \cup s'_a}_n(a)$ overtakes $\Util^{s_{\bar{a}} \cup s''_a}_n(a)$ for all possible $s_{\bar{a}}$, we say $s'_a$ {\em strongly dominates} $s''_a$.

%In an infinite game, we say that strategy profile $s^\ast$ is {\em weakly optimal} for $a$ if there does not exist a strategy profile $s \neq s^\ast$ for which $\Util^s_n(a)$ overtakes $\Util^{s^\ast}_n(a)$. If $\Util^{s^\ast}_n(a)$ overtakes $\Util^s_n(a)$ for all possible $s \neq s^\ast$, we say that $s^\ast$ is {\em strongly optimal} for $a$. We say that strategy profile $s^\ast$ is {\em Pareto optimal} if no other strategy profile is better for some players without being worse for others; formally, if there does not exist a strategy profile $s \neq s^\ast$ for which $\exists\ a \in A$ such that $\Util^s_n(a)$ overtakes $\Util^{s^\ast}_n(a)$ and $\nexists\ a' \in A$ such that $\Util^{s^\ast}_n(a')$ overtakes $\Util^s_n(a')$.

Multiple notions of equilibrium have been proposed in the literature. Due to the collaborative nature of the AC game, we consider a set of strategies to be in equilibrium if no two researchers would prefer to deviate from their current strategies in order to collaborate with one another. We formalize this by generalizing the notion of stability presented in~\cite{GS@AMM62}.

Given a strategy profile $s$ for the players in an infinite game, we say that the subset of players $A' \subseteq A$ is {\em unstable under $s$} if there exist alternate strategies $s'_{A'}$ for the players in $A'$ such that $(\forall\ a \in A')\ \Util^{s_{\bar{A'}} \cup
  s'_{A'}}_n(a)$ overtakes $\Util^s_n(a)$. We define a strategy profile $s^\ast$ to be a {\em $k$-stable equilibrium} if there does not exist an unstable set of size at most $k$. Subsequently, we use the term {\em equilibrium} to refer to a $2$-stable equilibrium.

\section{Evaluation}
\label{sec:eval}

\noindent
We use the model of the previous section to study the AC game.

\subsection{Single-Player Game}
\label{sec:collab-1pgame}

First, we consider the AC game when there is only one player, researcher $a$. In this case, $a$ may only write single-author papers; the question is how many papers to write and how to optimally distribute her research potential among them.

We begin by analyzing how the utility function grows when $a$ puts all of her effort into writing a single paper each year.

\begin{proposition}
\label{thm:collab-1p-asymptotics}
	Consider the single-player AC game of infinite horizon. Let $s^\ast$ denote the strategy of investing all research potential into a single paper each year. Then the limit behavior for $a$'s utility under $s^\ast$ is
{$\limsup_{n \to \infty}\ {Util}^{s^\ast}_n(a) \sim \sqrt{2n}.$}
\end{proposition}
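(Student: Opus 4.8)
The plan is to turn the proposition into a recursion for the h-index and then solve that recursion by a two-sided induction. Write $H_n := h^{s^\ast}_n(a)$; by definition this equals $\Util^{s^\ast}_n(a)$. Under $s^\ast$, in year $k$ the one paper written receives exactly $H_{k-1}+1$ citations, all of that year's research potential, where $H_{k-1}$ is read as the h-index in force at the start of the year (the only non-circular reading of the model). The key structural fact is that $H_n$ is non-decreasing, since inserting an element into a multiset cannot lower its h-index; hence the citation counts $H_0+1 \le H_1+1 \le \dots \le H_{n-1}+1$ of the first $n$ produced papers are non-decreasing, and the $h$ produced papers with the most citations are those of the last $h$ years, the least of which has $H_{n-h}+1$ citations. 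Letting $m_0 := |\chi_0(a)|$ be the constant number of initial papers, one then reads off three facts about $H_n$: (i) if $H_{n-h}+1 \ge h$ then $H_n \ge h$ (the top $h$ produced papers already witness h-index $h$); (ii) if $H_n = h$ then $H_{n-h+m_0}+1 \ge h$ (at most $m_0$ of the witnessing papers are initial ones); (iii) if $H_n = h$ then $H_{n-h-1}+1 \le h$ (no more than $h$ papers, and in particular no more than $h$ produced papers, have more than $h$ citations). These three facts, with $m_0$ an absolute constant, are all the input needed.

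Next I would prove by strong induction on $n$ the matching bounds $\sqrt{2n} - C_1 \le H_n \le \sqrt{2n} + C_2$ for constants $C_1, C_2$ depending only on $\chi_0(a)$. For the upper bound, apply (ii) with $h := H_n$ and substitute the inductive hypothesis $H_{n-h+m_0} \le \sqrt{2(n-h+m_0)} + C_2$; squaring and rearranging turns this into a quadratic inequality in $h$ that forces $h \le \sqrt{2n} + C_2$, provided $C_2$ is large enough to absorb $m_0$ and the finitely many base cases. For the lower bound, apply (iii) with $h := H_n$ and substitute $H_{n-h-1} \ge \sqrt{2(n-h-1)} - C_1$; squaring and rearranging gives $(h + C_1)^2 \ge 2n$ as soon as $C_1 \ge 3/2$, hence $h \ge \sqrt{2n} - C_1$, the degenerate case $n-h-1 < 0$ forcing $H_n \ge n$ and being trivial. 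Combining, $H_n = \sqrt{2n} + O(1)$, so $H_n/\sqrt{2n} \to 1$ and therefore $\Util^{s^\ast}_n(a) \sim \sqrt{2n}$; in particular the $\limsup$ form in the statement holds, the normalized sequence in fact converging.

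A quick independent check: in the clean case $\chi_0(a) = \emptyset$ one can show by induction that $H_n = h$ exactly when $\tfrac{h(h+1)}{2} \le n < \tfrac{(h+1)(h+2)}{2}$, since moving from level $h-1$ to level $h$ costs exactly $h$ further years — one needs $h$ papers with at least $h$ citations, and every paper written after level $h-1$ is reached carries precisely $h$ citations — so $H_n = \floor{\tfrac{\sqrt{8n+1}-1}{2}} \sim \sqrt{2n}$. The steps I expect to need the most care are setting up the recursion correctly (getting monotonicity, the identification of which papers witness the h-index, and the harmless constant shift from $\chi_0(a)$ all lined up) and then making the two-sided induction close, for which the crucial point is to exploit both that $H_n$ is an \emph{achievable} value of $h$ in (i)--(ii) and that it is the \emph{maximal} one in (iii); the intervening algebra is routine, and if pinning down the additive constants turns fussy it suffices to prove the weaker-looking $H_n = \sqrt{2n}\,(1 + o(1))$.
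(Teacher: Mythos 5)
Your proposal is correct, and it is worth contrasting with the paper's proof, which is much shorter: the paper asserts in one sentence that it suffices to treat $h_0(a)=0$, then observes that under $s^\ast$ climbing from h-index $h'$ to $h'+1$ takes exactly $h'+1$ years (each year yields one paper with exactly $h'+1$ citations), sums to get $n=\sum_{i=1}^h i = h(h+1)/2$, and inverts to obtain $h=\floor{(-1+\sqrt{1+8n})/2}\sim\sqrt{2n}$. That is precisely the computation you relegate to your ``quick independent check'' at the end, so you have the paper's argument in hand; but your main line of attack is genuinely different in structure. Instead of an exact closed form for the clean case plus an unargued reduction, you extract the three inequalities (i)--(iii) from monotonicity of $H_n$ and run a two-sided strong induction to get $H_n=\sqrt{2n}+O(1)$ with constants depending only on $\chi_0(a)$. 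What this buys is a rigorous treatment of arbitrary nonzero initial citation profiles --- the step the paper dispatches with ``if the claim holds for $h_0(a)=0$, then it also holds for $h_0(a)>0$'' without justification (note the initial profile both seeds extra papers and inflates the early research potential, so the reduction is not entirely immediate); the cost is a page of bookkeeping with $m_0$, $C_1$, $C_2$ where the paper needs three lines. Your algebra checks out (in particular the $C_1\ge 3/2$ and $C_2\gtrsim m_0$ thresholds close the induction), and your observation that the sequence actually converges, so that the $\limsup$ in the statement is not doing any work, is accurate.
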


\begin{proof}
If the claim holds for $h_0(a) = 0$, then it also holds for $h_0(a) > 0$, so assume that $h_0(a) = 0$. Following strategy $s^\ast$, from the time $a$ reaches an h-index of $h'$, it will take $h'+1$ years to accumulate $h'+1$ papers with $h'+1$ citations each. Thus $a$ requires a total of $n = \sum_{i=1}^h i = \frac{h(h+1)}{2}$ years to achieve an h-index of $h$. Conversely, as the number of years $n$ goes to infinity, $a$ achieves a utility of

\hspace*{0.2in}$
	\limsup\limits_{n \to \infty}\ {Util}^{s^\ast}_n(a) =
	%\limsup_{n \to \infty}\ h^{s^\ast}_n(a) =
	\limsup\limits_{n \to \infty}\ \textstyle{\floor{\frac{-1 + \sqrt{1 + 8n}}{2}}}
  \sim \sqrt{2n}. 
$\hspace*{0.1in}
\end{proof}

We now compare $a$'s success under the single-paper strategy relative to other possible ways of distributing her effort.

\begin{lemma}
\label{thm:collab-1p-overtake}
	Consider the single-player AC game of infinite horizon. Let $s^\ast$ denote the strategy profile where each year the player $a$ invests all research potential into a single paper. Then for all strategy profiles $s \neq s^\ast$, $h^{s^\ast}_n(a)$ overtakes $h^{s}_n(a)$.
\end{lemma}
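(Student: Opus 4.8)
For $h^{s^\ast}_n(a)$ to overtake $h^s_n(a)$ I must show $\liminf_{n\to\infty}\bigl(h^{s^\ast}_n(a)-h^s_n(a)\bigr)\ge 0$ and $\limsup_{n\to\infty}\bigl(h^{s^\ast}_n(a)-h^s_n(a)\bigr)>0$; since both sequences are integer-valued, the second amounts to $h^{s^\ast}_n(a)>h^s_n(a)$ for infinitely many $n$. For a multi-set $Z$ of non-negative integers and an integer $j\ge 0$ write $N_j(Z)=\abs{\{z\in Z:z\ge j\}}$, so that $h(Z)=\max\{h:N_h(Z)\ge h\}$. The plan is to get the $\liminf$ condition from an ``$s^\ast$ dominates'' invariant, and then to upgrade it to the $\limsup$ condition by showing that once $s$ deviates from $s^\ast$, the resulting advantage of $s^\ast$ is never lost and must eventually manifest as a strictly larger h-index.

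\emph{Step 1 (domination invariant).} I would prove by induction on $n$ that for every strategy $s$ and every $n$, either $\chi^{s^\ast}_n(a)=\chi^s_n(a)$ as multi-sets or $\chi^{s^\ast}_n(a)\hpref\chi^s_n(a)$. The base case $n=0$ is equality. For the step: in the single-player game $s^\ast$ produces in year $n$ one paper carrying its full research potential $q:=h^{s^\ast}_{n-1}(a)+1$ citations, while $s$ produces papers with positive citation counts $c_1,\dots,c_k$ summing to its potential $h^s_{n-1}(a)+1$, and the inductive hypothesis gives $h^{s^\ast}_{n-1}(a)\ge h^s_{n-1}(a)$, so $\sum_i c_i\le q$. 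Writing $X=\chi^{s^\ast}_{n-1}(a)$ and $Y=\chi^s_{n-1}(a)$ (so $X=Y$ or $X\hpref Y$, hence $X\hprefeq Y$, and $q=h(X)+1$), it suffices to check the elementary fact that $X\cup\{q\}\hprefeq Y\cup\{c_1,\dots,c_k\}$, and that this becomes strict whenever $X\hpref Y$ or the two year-$n$ moves differ. The verification is a short case analysis on the counts $N_j$ at $j\ge h(X)+1$: adding a paper with $q=h(X)+1$ citations leaves $h(X\cup\{q\})$ equal to $h(X)$ or $h(X)+1$; and when $s$'s move differs from $s^\ast$'s, every $c_i$ is at most $h(X)$, so the $c_i$ cannot raise $N_{h(X)+1}$ on the $Y$-side, whereas the $s^\ast$-side gains there, which is the source of strictness. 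In particular $h^{s^\ast}_n(a)\ge h^s_n(a)$ for all $n$, which yields $\liminf_n\bigl(h^{s^\ast}_n(a)-h^s_n(a)\bigr)\ge 0$.

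\emph{Step 2 (strict advantage infinitely often).} Assume $s\ne s^\ast$ and let $y_0$ be the first year in which $s$ produces a different multi-set of citation counts than $s^\ast$; since $s^\ast$ always produces the single paper $\{q\}$, $s$ must produce at least two papers in year $y_0$, so $\chi^{s^\ast}_{y_0}(a)\ne\chi^s_{y_0}(a)$ and Step 1 gives $\chi^{s^\ast}_n(a)\hpref\chi^s_n(a)$ for all $n\ge y_0$. Suppose, for contradiction, that $h^{s^\ast}_n(a)>h^s_n(a)$ fails for all but finitely many $n$; with Step 1 this forces $h^{s^\ast}_n(a)=h^s_n(a)$ for all $n\ge N$, some $N\ge y_0$. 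Put $H=h^{s^\ast}_N(a)$; equal h-indices together with $\chi^{s^\ast}_N(a)\hpref\chi^s_N(a)$ give a threshold $j_0>H$ with $N_{j_0}\bigl(\chi^{s^\ast}_N(a)\bigr)>N_{j_0}\bigl(\chi^s_N(a)\bigr)$ while $N_j\bigl(\chi^{s^\ast}_N(a)\bigr)\ge N_j\bigl(\chi^s_N(a)\bigr)$ for all $j>H$. The key rate bound is: a player whose h-index is $\ell$ has research potential only $\ell+1$, hence can create at most one paper with $\ge\ell+1$ citations in a year, so $N_{\ell+1}$ can rise by at most one per year, and to rise by exactly one it must put its whole potential into one $(\ell+1)$-citation paper. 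By Proposition~\ref{thm:collab-1p-asymptotics}, $h^{s^\ast}_n(a)\to\infty$, and it increases by at most one per year, so $s^\ast$ visits the levels $H,H+1,H+2,\dots$ in succession, leaving level $\ell$ exactly when $N_{\ell+1}$ first reaches $\ell+1$. Since the h-indices agree from $N$ on, $s$ must leave each level $\ell$ at the same moment, so at the start of level $\ell$ it needs $N_{\ell+1}(\chi^s)=N_{\ell+1}(\chi^{s^\ast})$ and must then imitate $s^\ast$'s single-paper move every year of that level. Inducting on $\ell$ from $H$ to $j_0-1$: for $\ell<j_0-1$ the forced moves are $(\ell+1)$-citation papers with $\ell+1<j_0$, which do not change $N_{j_0}$, so the strict inequality $N_{j_0}(\chi^{s^\ast})>N_{j_0}(\chi^s)$ survives until level $j_0-1$ begins. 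But then $s^\ast$ reaches h-index $j_0$ after exactly $j_0-N_{j_0}(\chi^{s^\ast})$ more years, while $s$ needs at least $j_0-N_{j_0}(\chi^s)>j_0-N_{j_0}(\chi^{s^\ast})$ years, so $s$ reaches $j_0$ strictly later, contradicting $h^s_n(a)=h^{s^\ast}_n(a)$ for $n\ge N$. Hence $h^{s^\ast}_n(a)>h^s_n(a)$ for infinitely many $n$, which gives the $\limsup$ condition. (No step above used $h_0(a)=0$, so the general case needs no separate reduction.)

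\emph{Expected main obstacle.} Step 2 is the delicate part: eventual strong $h$-preferability of the citation profiles does not by itself force a gap in the h-index, because the witnessing surplus can sit in papers whose citation count is far above the current h-index. The bounded-rate argument, pushed level by level to pin down exactly which $N_j$ is frozen at each stage, is what converts that surplus into an honest h-index advantage, and keeping that bookkeeping straight is the crux; the case analysis behind the elementary fact in Step 1 is routine but does rely essentially on the form $Q=h+1$ of the research potential.
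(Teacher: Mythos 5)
Your proof is correct and follows the same skeleton as the paper's: locate the first year in which $s$ deviates from $s^\ast$, show that from then on the citation profile under $s^\ast$ strongly h-dominates the one under $s$, and combine the resulting inequality $h^{s^\ast}_n(a)\ge h^s_n(a)$ with the fact that $h^{s^\ast}_n(a)\to\infty$ to conclude overtaking. The substantive difference is in how much actually gets proved. The paper compresses the second half into the sentence ``it follows by induction that $H^{s^\ast}_y(a)\hpref H^s_y(a)$ for all $y\ge y^\ast$, and furthermore that $h^{s^\ast}_y(a)>h^s_y(a)$ for all years in which $h^{s^\ast}(a)$ increases''; that ``furthermore'' is exactly the point you flag as delicate, since a surplus sitting at a citation level far above the current h-index does not by itself force the next h-index threshold to be crossed earlier. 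Your Step~2 supplies an actual argument for this via the rate bound: a player at h-index $\ell$ can raise $N_{\ell+1}$ by at most one per year, so matching $s^\ast$'s level-crossing times forces $s$ to replicate the single-paper move at every level, which freezes the deficit at $j_0$ until it becomes the binding constraint and yields the contradiction. In doing so you prove a slightly weaker statement than the paper asserts (strict inequality infinitely often rather than at every increase year), which is all the definition of overtaking requires, so nothing is lost. One spot in Step~1 deserves an explicit line: when $h(X)=h(Y)=\ell$ and the only strict gap is at level $\ell+1$, you should note that $N_{\ell+1}(X)\le\ell$ (otherwise $h(X)$ would exceed $\ell$), hence $N_{\ell+1}(Y)\le\ell-1$ and the $Y$-side cannot reach h-index $\ell+1$ that year; this is what prevents the witnessing gap from being absorbed below the new common h-index, and it is the one place where the otherwise routine case analysis could appear to leak.
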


\begin{proof}
Consider a strategy profile $s \neq s^\ast$. Let $y^\ast$ be the first
year in which the outcome is different under $s^\ast$ and $s$, so that 
\[h_{y^\ast-1}(a) = h^{s^\ast}_{y^\ast-1}(a) = h^{s}_{y^\ast-1}(a)
\quad \text{ and }\]
\[\Haug_{y^\ast-1}(a) = \Haug^{s^\ast}_{y^\ast-1}(a) =
\Haug^{s}_{y^\ast-1}(a).\]
 Since $s^\ast$ produces a single paper that will receive $Q_{y^\ast}(a) = h_{y^\ast-1}(a) + 1$ citations, $a$'s strategy under $s$ must split the research potential between at least two papers, each thus receiving at most $h_{y^\ast-1}(a)$ citations, resulting in $H^{s^\ast}_{y^\ast}(a) \hpref H^{s}_{y^\ast}(a)$. It follows by induction that $H^{s^\ast}_y(a) \hpref H^{s}_y(a)$ for all $y \geq y^\ast$, and furthermore, that $h^{s^\ast}_y(a) > h^{s}_y(a)$ for all years $y \geq y^\ast$ in which $h^{s^\ast}(a)$ increases. By definition, $h^{s^\ast}_n(a)$ overtakes $h^{s}_n(a)$.
\end{proof}

%We now present optimality results for the three measures of academic impact suggested in Section~\ref{sec:collab-metrics}.

%\begin{theorem}
%\label{thm:collab-1p-optimal}
%	Consider the single-player AC game of infinite horizon. Let $s^\ast$ denote the strategy profile where each year the player $a$ invests all research potential into a single paper. Then $s^\ast$ is strongly optimal for $a$.
%%under each of the utility functions $Util_y(a) = h_y(a)$, $Util_y(a) = \inst_y(a)$, and $Util_y(a) = \prog_y(a)$.
%\end{theorem}
%
%\begin{proof}
%This follows directly from Lemma~\ref{thm:collab-1p-overtake}.
%%and the definitions of the Instantaneous and Progressive Social h-index.
%\end{proof}

\begin{theorem}
\label{thm:collab-1p-equilibrium}
	Consider the single-player AC game of infinite horizon. Let $s^\ast$ denote the strategy profile where each year the player $a$ invests all research potential into a single paper. Then $s^\ast$ is the only equilibrium.
%under each of the utility functions $Util_y(a) = h_y(a)$, $Util_y(a) = \inst_y(a)$, and $Util_y(a) = \prog_y(a)$.
\end{theorem}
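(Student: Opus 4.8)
The plan is to show two things: (1) $s^\ast$ is a $2$-stable equilibrium (which for a single-player game just means a $1$-stable equilibrium, i.e.\ a $1$-stable equilibrium in the sense that the single player has no profitable deviation), and (2) no other strategy profile is an equilibrium. For part (1), I would invoke Lemma \ref{thm:collab-1p-overtake}: for any alternate strategy $s$ for player $a$ with $s \neq s^\ast$, the lemma gives that $h^{s^\ast}_n(a)$ overtakes $h^s_n(a)$. Since $\Util_n(a) = h_n(a)$, this means $\Util^{s^\ast}_n(a)$ overtakes $\Util^s_n(a)$, and therefore $\Util^s_n(a)$ does \emph{not} overtake $\Util^{s^\ast}_n(a)$ --- because, as noted in the excerpt, ``$f_n$ overtakes $g_n$'' and ``$g_n$ overtakes $f_n$'' are mutually exclusive. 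Hence $\{a\}$ is not unstable under $s^\ast$, and since $A = \{a\}$ has only one nonempty subset, $s^\ast$ is a $2$-stable equilibrium.

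For part (2), let $s \neq s^\ast$ be any strategy profile; I must exhibit an unstable set under $s$. Take the deviation $s'_{\{a\}} = s^\ast$. By Lemma \ref{thm:collab-1p-overtake} again, $h^{s^\ast}_n(a) = h^{s'_{\{a\}}}_n(a)$ overtakes $h^s_n(a)$, i.e.\ $\Util^{s'_{\{a\}}}_n(a)$ overtakes $\Util^s_n(a)$. Thus $\{a\}$ is unstable under $s$, so $s$ is not an equilibrium. Combining with part (1), $s^\ast$ is the unique equilibrium.

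The only subtlety --- and the one place to be careful --- is the interaction between the definition of overtaking and the definition of instability. Instability of $\{a\}$ under $s$ requires a strategy $s'$ with $\Util^{s'}_n(a)$ overtaking $\Util^s_n(a)$; Lemma \ref{thm:collab-1p-overtake} delivers exactly this when $s'=s^\ast\neq s$, so part (2) is immediate. For part (1) one needs the converse direction: that $s^\ast$ overtaking every competitor precludes any competitor overtaking $s^\ast$. This rests on the stated trichotomy for the overtaking relation, so I would simply cite that observation rather than re-prove it. I do not anticipate a genuine obstacle here: the real work was done in Lemma \ref{thm:collab-1p-overtake}, and this theorem is essentially a packaging of that lemma into the equilibrium language, with the one-player case making the ``no two researchers deviate'' condition collapse to ``the single researcher does not deviate.''
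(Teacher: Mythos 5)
Your proposal is correct and matches the paper's approach: the paper's entire proof is the single sentence that the theorem ``follows directly from Lemma~\ref{thm:collab-1p-overtake},'' and your two-part argument (using the lemma plus the mutual exclusivity of overtaking to show $s^\ast$ is an equilibrium, and using the lemma directly to show every $s \neq s^\ast$ admits the unstable singleton $\{a\}$) is exactly the reasoning being left implicit there.
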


\begin{proof}
This follows directly from Lemma~\ref{thm:collab-1p-overtake}.
%and the definitions of the Instantaneous and Progressive Social h-index.
\end{proof}

%Finally, we consider the price of anarchy and stability.
%
%\begin{theorem}
%\label{thm:collab-1p-poapos}
%	The price of anarchy and the price of stability for the single-player AC game of infinite horizon are $PoA = PoS = 1$.
%%under each of the utility functions $Util_y(a) = h_y(a)$, $Util_y(a) = \inst_y(a)$, and $Util_y(a) = \prog_y(a)$.
%\end{theorem}
%
%\begin{proof}
%The maximum social welfare possible is $W(A) = 1$, which is achieved by the single equilibrium strategy.
%\end{proof}

We have shown the strategy described above to be optimal for the single-player AC game. However, a researcher may hope to have a greater impact by collaborating with others. We explore this possibility in the following sections.

\subsection{Two-Player Game}
\label{sec:collab-2pgame}

We now consider the AC game with two players, $a$ and $a'$. For simplicity, we only analyze the case where $H_0(a) = H_0(a')$, i.e. initially both researchers have the same h-profile; the results can be generalized for arbitrary initial citation profiles. Note that if all papers produced through year $y$ are joint between $a$ and $a'$, then $h_y(a) = h_y(a')$, $H_y(a) = H_y(a')$, and $\Haug_y(a) = \Haug_y(a')$, in which case we will denote them by $h_y$, $H_y$, and $\Haug_y$, respectively.

We begin by considering two collaborative strategy profiles, and
analyze how the players' utility grows under in each case: one where
both players pool all their effort into a single joint paper, and
another where they collaborate on two papers simultaneously. 
%We analyze how the players' utility functions grow under each scenario.

\begin{proposition}
\label{thm:collab-2p-asymptotics1}
	Consider the two-player AC game of infinite horizon, where $H_0(a) = H_0(a')$. Let $s^\ast$ denote the strategy profile where each year the players invest their research potential into a single joint paper. Then the limit behavior for each player's utility under $s^\ast$ is
{$
	\limsup_{n \to \infty}\ {Util}^{s^\ast}_n \geq n/2.$}
\end{proposition}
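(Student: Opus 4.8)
The plan is to prove the sharper pointwise statement $\Util^{s^\ast}_n = h_n \ge n/2$ for every $n \ge 1$; this yields the claimed limiting behavior and should be contrasted with the sublinear $\sqrt{2n}$ growth of Proposition~\ref{thm:collab-1p-asymptotics}. First I would isolate the structural facts the argument needs. Under $s^\ast$ every paper is joint, so (as already noted) $h_y(a) = h_y(a') = h_y$ throughout, and in year $y$ each player invests $Q_y = h_{y-1}+1$ units entirely into the single joint paper, which therefore receives exactly $2(h_{y-1}+1)$ citations; moreover $h_0 \le h_1 \le h_2 \le \cdots$, since adding a paper can never decrease an h-index. In particular the paper produced in year $y$ has at least as many citations as the paper produced in any earlier year.

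The core is an induction on $n$ showing $h_n \ge \ceil{n/2}$. For the inductive step I would set $h := \ceil{n/2}$ and look at the $h$ most recently produced joint papers, namely those from years $n-h+1, \dots, n$. By monotonicity each of them has at least $2(h_{n-h}+1)$ citations; by the induction hypothesis $h_{n-h} \ge (n-h)/2$, so each of these $h$ papers carries at least $2\left((n-h)/2 + 1\right) = n-h+2$ citations. The choice $h = \ceil{n/2}$ guarantees $2h \le n+2$, i.e.\ $n - h + 2 \ge h$, so there are $h$ papers each holding at least $h$ citations and hence $h_n \ge h$. The base cases are immediate ($h_0 \ge 0$; and in year $1$ the joint paper receives $2(h_0+1) \ge 2$ citations, so $h_1 \ge 1$), and for $n \ge 2$ one has $n-h = \floor{n/2} \le n-1$, so the hypothesis indeed applies. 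Finally, the assumption $H_0(a) = H_0(a')$ with an arbitrary common initial h-profile costs nothing: the argument uses only $h_0 \ge 0$, monotonicity of $h_y$, and the $2(h_{y-1}+1)$ citation count of each joint paper.

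The step that needs care is making the self-reference close at rate $1/2$. A constant-size look-back does not work: one cannot argue that the h-index rises by at least one every two years, because under $s^\ast$ it can stay constant for several consecutive years. The fix is to reach back $\ceil{n/2}$ years and use that each joint paper carries roughly twice the then-current h-index in citations, so that the mid-game h-index (at least $\floor{n/2}/2$ by induction) is already large enough to make the $\ceil{n/2}$ papers of the second half each clear the bar $\ceil{n/2}$. Once the look-back window and the inductive claim $h_m \ge \ceil{m/2}$ are chosen consistently, the whole thing collapses to the inequality $2\ceil{n/2} \le n + 2$.
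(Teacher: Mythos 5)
Your proof is correct and rests on the same key observation as the paper's: once the common h-index reaches roughly $h/2$, every subsequent joint paper (receiving $2(h_{y-1}+1)$ citations) clears the bar $h$, so h-index $h$ is reached within about $2h$ years. The paper phrases this as a recursion on the target h-index ($y^{s^\ast}_h \le y^{s^\ast}_{\lceil h/2\rceil - 1} + h$, unrolled to a geometric sum bounded by $2h$), whereas you invert it into a (strong) induction on the year yielding the pointwise bound $h_n \ge \lceil n/2 \rceil$ -- a slightly cleaner, non-asymptotic packaging of the same argument.
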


\begin{proof}
If the claim holds for $h_0 = 0$, then it also holds for $h_0 > 0$, so assume that $h_0 = 0$. We use recursion to give a bound on $y^{s^\ast}_h$, the number of years needed to achieve an h-index of $h$ under $s^\ast$. We have that $y^{s^\ast}_0 = 0$, and $y^{s^\ast}_h \leq y^{s^\ast}_{\ceil{h/2} - 1} + h$, since after they have achieved h-index of $\ceil{h/2} - 1$, each of the following $h$ years they will produce a paper with at least $h$ citations each. We get the following bound:
\begin{align*}
y^{s^\ast}_h & \leq  y^{s^\ast}_{\ceil{h/2} - 1} + h 
 \leq  y^{s^\ast}_{\floor{h/2}} + h 
 \leq  h + h/2 + \ldots  \leq  2h.
\end{align*}
Conversely, $h \geq y^{s^\ast}_h/2$, so as the number of years $n$ goes to infinity, each player achieves a utility of
$\limsup\limits_{n \to \infty}\ {Util}^{s^\ast}_n \geq n/2.$
\end{proof}

\begin{proposition}
\label{thm:collab-2p-asymptotics2}
	Consider the two-player AC game of infinite horizon, where $H_0(a) = H_0(a')$. Let $s^\vDash$ denote the strategy profile where each year the players split their research potential evenly between two joint papers. Then the limit behavior for each player's utility under $s^\vDash$ is
{$\limsup_{n \to \infty}\ {Util}^{s^\ast}_n \sim 2\sqrt{n}.$}
\end{proposition}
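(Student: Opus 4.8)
The plan is to mirror the proof of Proposition~\ref{thm:collab-1p-asymptotics}, adjusted for the fact that $s^\vDash$ produces two papers per year instead of one. First I would reduce to the case $h_0 = 0$ exactly as in that proof: a nonzero common initial h-profile only gives the players a head start, so it suffices to establish the asymptotics when both players begin with h-index $0$. I would then record the mechanics of $s^\vDash$. Since every paper through year $y$ is joint, $h_y(a) = h_y(a') =: h_y$, so in year $y$ the players bring a pooled research potential of $2(h_{y-1}+1)$, and splitting it evenly across two joint papers, each of the two papers published in year $y$ receives $h_{y-1}+1$ citations. (When $h_{y-1}+1$ is odd, the two authors can allocate their individual potentials as $\floor{(h_{y-1}+1)/2}$ and $\ceil{(h_{y-1}+1)/2}$ in opposite orders so that each paper still receives exactly $h_{y-1}+1$ citations; in any event the two counts differ by at most $2$, which does not affect the asymptotics.)

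Next I would set up a recursion for $y_h$, the first year in which the common h-index reaches $h$, with $y_0 = 0$, and prove the identity $y_h = y_{h-1} + \ceil{h/2}$. For the upper bound: once the h-index equals $h-1$ (year $y_{h-1}$), in each later year the two papers produced each receive $h_{y-1}+1 \geq h$ citations, so after $\ceil{h/2}$ further years there are at least $2\ceil{h/2} \geq h$ papers with at least $h$ citations, forcing the h-index up to $h$. For the matching lower bound: a paper published in year $y$ has exactly $h_{y-1}+1$ citations, which is $\geq h$ only if $h_{y-1} \geq h-1$, i.e.\ only if $y > y_{h-1}$ --- here I use that $y_{h-1}$ is the \emph{first} year with h-index $h-1$, so $h_{y_{h-1}-1} \leq h-2$. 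Hence every paper that can count toward an h-index of $h$ is produced after year $y_{h-1}$, at a rate of two per year, so at least $\ceil{h/2}$ such years are required; one also checks that the h-index cannot overshoot, since all of these papers have exactly $h$ citations, never $\geq h+1$.

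Finally I would solve the recursion: $y_h = \sum_{i=1}^{h}\ceil{i/2}$, which equals $\tfrac{h}{2}\bigl(\tfrac{h}{2}+1\bigr)$ for even $h$ and $\tfrac{(h+1)^2}{4}$ for odd $h$, hence $y_h = \tfrac{h^2}{4} + O(h)$. Inverting, $h_n = \max\{h : y_h \leq n\} = 2\sqrt{n} + O(1)$, and since under $s^\vDash$ papers and their citation counts only accumulate, $h_n$ is non-decreasing in $n$, so $\limsup_{n\to\infty}\Util^{s^\vDash}_n = \lim_{n\to\infty} h_n \sim 2\sqrt{n}$, as claimed.

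I expect the only real fussiness to be (i) the integrality of the "even" split, handled by the coordinated allocation above or absorbed into an $O(1)$ error, and (ii) making the recursion for $y_h$ tight from below --- i.e.\ ruling out that low-citation papers from early years are recycled to inflate the h-index prematurely --- which is exactly what the observation ``a year-$y$ paper has precisely $h_{y-1}+1$ citations'' takes care of. Everything else is the same bookkeeping as in Propositions~\ref{thm:collab-1p-asymptotics} and~\ref{thm:collab-2p-asymptotics1}.
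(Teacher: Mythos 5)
Your proposal is correct and follows essentially the same route as the paper's proof: reduce to $h_0 = 0$, observe that advancing the common h-index from $h-1$ to $h$ takes $\ceil{h/2}$ years, sum to get $n = \sum_{i=1}^{h}\ceil{i/2} \approx h^2/4$, and invert to obtain $2\sqrt{n}$. The extra care you take with the integrality of the ``even'' split and with ruling out that earlier low-citation papers contribute to the h-index makes the argument tighter than the paper's version, but it is the same proof.
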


\begin{proof}
If the claim holds for $h_0 = 0$, then it also holds for arbitrary initial citation profiles, so assume that $h_0 = 0$. Following strategy $s^\vDash$, from the time the players each reach an h-index of $h'$, it will take $\ceil{(h'+1)/2}$ years to accumulate $h'+1$ papers with $h'+1$ citations each. Thus a total of $n = \sum_{i=1}^h \ceil{i/2} \geq \frac{h(h+1)}{4}$ years are required to achieve an h-index of $h$. Conversely, as the number of years $n$ goes to infinity, each player achieves a utility of

\hspace*{0.2in}$
	\limsup\limits_{n \to \infty}\ {Util}^{s^\vDash}_n =
	%\limsup_{n \to \infty}\ h^{s^\vDash}_n =
	\limsup\limits_{n \to \infty}\ \textstyle{\floor{\frac{-1 + \sqrt{1 + 16n}}{2}}} \sim 2\sqrt{n}.
$\hspace*{0.15in}
\end{proof}

We now examine how these two strategy profiles compare to other possible strategies for the two-player game.

\begin{lemma}
\label{thm:collab-2p-overtake1}
	Consider the two-player AC game of infinite horizon, where $H_0(a) = H_0(a')$. Let $s^\ast$ denote the strategy profile where each year the players invest their research potential into a single joint paper, and let $s^\vDash$ denote the strategy profile where each year the players split their research potential evenly between two joint papers. Let $S^{\{\ast,\,\vDash\}}$ denote the set of strategy profiles that each year prescribe either $s^\ast$ or $s^\vDash$. Then for any strategy profile $s \notin S^{\{\ast,\,\vDash\}}$, $\exists\ s' \in S^{\{\ast,\,\vDash\}}$ s.t.\ $h^{s'}_n$ overtakes both $h^s_n(a)$ and $h^s_n(a')$.
\end{lemma}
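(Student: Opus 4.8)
The plan is to follow the template of Lemma~\ref{thm:collab-1p-overtake}: locate the first year in which $s$ leaves the menu $\{s^\ast,s^\vDash\}$, from that year on replace its outcome by a \emph{canonical} one (either $s^\ast$ or $s^\vDash$, chosen adaptively), and show the replacement strongly h-dominates $s$'s outcome for \emph{both} players forever after. As in the earlier results it suffices to take $h_0=0$. Let $y^\ast$ be the first year in which the outcome $s$ prescribes is equivalent to neither $s^\ast$ nor $s^\vDash$; for $y<y^\ast$ every prescribed outcome is that of $s^\ast$ or $s^\vDash$, so all papers through year $y^\ast-1$ are joint, $h^s_{y^\ast-1}(a)=h^s_{y^\ast-1}(a')=:h$, and the two players share a common h-profile and h-augmenting profile. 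Define $s'\in S^{\{\ast,\,\vDash\}}$ to agree with $s$ on years $1,\dots,y^\ast-1$ (legal, since those outcomes lie in the menu) and, in each year $y\ge y^\ast$, to prescribe $s^\vDash$ if under $s$ player $a$ receives two new papers with strictly more than $h^{s'}_{y-1}$ citations, and $s^\ast$ otherwise; since $s'$ only ever uses joint-paper moves, its two players remain symmetric and we write $h^{s'}_y$ for their common h-index.

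The engine is a per-year structural fact. At the start of year $y\ge y^\ast$ each of $s$'s players has research potential at most $h'+1$ where $h'=h^{s'}_{y-1}$, which by the invariant (below) is at least $\max(h^s_{y-1}(a),h^s_{y-1}(a'))$; hence under $s$ a joint paper carries at most $2h'+2$ citations and a solo paper at most $h'+1$. A short count then shows player $a$ can get at most two new papers exceeding $h'$ citations, and that exactly two forces $s$'s action to be the $s^\vDash$-move (both papers at exactly $h'+1$, all potential spent on them, and both players already at h-index $h'$), while one new paper attaining the maximum $2h'+2$ forces the $s^\ast$-move. Therefore: when $s$ plays the $s^\vDash$-move, $s'$ also plays $s^\vDash$ and both outcomes append $\{h'+1,h'+1\}$ to profiles already ordered by $\hprefeq$; otherwise $s'$ plays $s^\ast$, appending $\{2h'+2\}$, which weakly h-dominates the at-most-one above-threshold paper (value $\le 2h'+1$) plus sub-$h'$ junk that $s$ contributes, and at $y=y^\ast$ this domination is strict because $s$'s departing action cannot realize citation value $2h'+2$. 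An induction in the spirit of Lemma~\ref{thm:collab-1p-overtake} then gives $H^{s'}_y\hpref H^s_y(a)$ and $H^{s'}_y\hpref H^s_y(a')$ for all $y\ge y^\ast$, where the strict component, once introduced at $y^\ast$, is never erased (it at worst migrates to a higher threshold as the h-index grows).

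To finish, weak h-preference yields $h^{s'}_n\ge h^s_n(a)$ and $h^{s'}_n\ge h^s_n(a')$ for all $n$, so $\liminf_n\bigl(h^{s'}_n-h^s_n(a)\bigr)\ge 0$ and likewise for $a'$. Since $h^{s'}_n\to\infty$ while the strict part of $H^{s'}_n\hpref H^s_n(a)$ persists, a year in which $h^{s'}_n>h^s_n(a)$ must recur infinitely often (an irremovable extra h-augmenting paper, or a strict h-index gap, eventually manifests as a strictly higher h-index once the threshold is crossed), so $\limsup_n\bigl(h^{s'}_n-h^s_n(a)\bigr)>0$; the same applies to $a'$. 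Hence $h^{s'}_n$ overtakes both $h^s_n(a)$ and $h^s_n(a')$.

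The step I expect to be hardest is the persistence induction. Unlike the single-player case, $s'$ must dominate two players whose h-indices and h-profiles under $s$ can drift apart, and the right canonical move genuinely depends on $s$: playing $s^\ast$ every year is insufficient, since if $s$ switches to the $s^\vDash$-move at a moment when its h-index has caught up to $s'$'s and both sides have exactly $h'-1$ papers above threshold, $s$ bumps its h-index while a pure-$s^\ast$ imitator would not. Pinning down an invariant strong enough to survive this mimicry, and to carry the strict inequality across changes in the h-index, is where the real work concentrates.
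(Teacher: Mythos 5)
Your proposal follows essentially the same route as the paper: you build a shadowing profile $s'\in S^{\{\ast,\,\vDash\}}$ that copies $s$ whenever $s$'s outcome is a menu move and defaults to $s^\ast$ otherwise, use the citation-budget count $\sum_p \cit(p)\le 2(h'+1)$ to show $s$'s deviating year yields at most one above-threshold paper and none with $2h'+2$ citations, obtain strict h-preference at $y^\ast$, and propagate it by induction to conclude overtaking. The persistence induction you flag as the hard step is asserted at the same (in fact slightly lesser) level of detail in the paper itself, so your argument is at least as complete as the published one.
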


\begin{proof}
Consider a strategy profile $s \notin S^{\{\ast,\,\vDash\}}$. Consider the strategy profile $s'$ which is identical to $s$ for game states in which $s$ prescribes actions according to $s^\ast$ or $s^\vDash$, and behaves like $s^\ast$ otherwise. Let $y'$ be the first year in which $s$ and $s'$ differ, so that $H_{y'-1} = H^{s'}_{y'-1} = H^s_{y'-1}$. Let $P^s_{y'}$ denote the set of papers produced by $s$ in year $y'$, then we have $\sum_{p \in P^s_{y'}} cit(p) = 2(h_{y'-1} + 1)$. Since $s$ differs from $s^\vDash$ in year $y'$, there can be at most one paper with $\geq h_{y'-1} + 1$ citations; and since it differs from $s^\ast$, no paper can have $2(h_{y'-1} + 1)$ citations; it follows that $H^{s'}_{y'} \hpref H^s_{y'}(a)$ and $H^{s'}_{y'} \hpref H^s_{y'}(a')$. It follows by induction that $H^{s'}_y \hpref H^s_y(a)$ and $H^{s'}_y \hpref H^s_y(a')$ for all $y \geq y'$, and furthermore, $h^{s'}_y > h^s_y(a)$ and $h^{s'}_y > h^s_y(a')$ for all years $y \geq y'$ in which $h^{s'}_y$ increases. By definition, $h^{s'}_n$ overtakes $h^s_n(a)$ and $h^s_n(a')$.
\end{proof}

\begin{lemma}
\label{thm:collab-2p-overtake2}
	Consider the two-player AC game of infinite horizon, where $H_0(a) = H_0(a')$. Let $s^\ast$ denote the strategy profile where each year the players invest their research potential into a single joint paper. Then there does not exist a strategy profile $s \neq s^\ast$ such that either $h^s_n(a)$ or $h^s_n(a')$ overtakes $h^{s^\ast}_n$.
\end{lemma}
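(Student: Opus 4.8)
\noindent\emph{Proof plan.}
Recall $f_n$ overtakes $g_n$ iff $\limsup_n(f_n-g_n)>0$ and $\liminf_n(f_n-g_n)\ge 0$, so it is enough to verify, for each strategy profile $s\ne s^\ast$ and each of $a,a'$, one of the alternatives \textbf{(i)} $h^s_n(a)\le h^{s^\ast}_n$ for all large $n$ (then $\limsup\le 0$), or \textbf{(ii)} $h^{s^\ast}_n>h^s_n(a)$ for infinitely many $n$ (then $\liminf\le -1<0$, the values being integral). Two facts are used throughout: h-indices never decrease as papers are added, so $h^s_n(a),h^s_n(a'),h^{s^\ast}_n$ are non-decreasing and, by Proposition~\ref{thm:collab-2p-asymptotics1}, $h^{s^\ast}_n\ge\floor{n/2}\to\infty$; and in any profile the papers $a$ authors in year $i$ together receive at most $Q_i(a)+Q_i(a')=h^s_{i-1}(a)+h^s_{i-1}(a')+2$ citations, since $a$'s potential sums to $Q_i(a)$ and her only possible coauthor $a'$ can spend at most $Q_i(a')$ on joint papers. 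If $h^s_n(a)$ is bounded then $h^{s^\ast}_n-h^s_n(a)\to\infty$ and (i) holds (symmetrically for $a'$); so assume $h^s_n(a)\to\infty$. One may also first invoke Lemma~\ref{thm:collab-2p-overtake1} together with transitivity of ``overtakes'' --- if $h^{s'}_n$ overtakes $h^s_n(a)$ and $h^s_n(a)$ overtakes $h^{s^\ast}_n$ then $h^{s'}_n$ overtakes $h^{s^\ast}_n$, by the usual $\limsup$/$\liminf$ inequalities --- to reduce to $s\in S^{\{\ast,\,\vDash\}}$, in which case all papers under $s$ are joint and, since $H_0(a)=H_0(a')$, $h^s_n(a)=h^s_n(a')=:h^s_n$.

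Let $y^\ast$ be the first year whose outcome differs under $s$ and $s^\ast$, so $H_{y^\ast-1}$ and $\Haug_{y^\ast-1}$ agree for both players under both profiles, while in year $y^\ast$ the profile $s^\ast$ puts all $2(h_{y^\ast-1}+1)$ units into one joint paper and $s$ does not. The mechanism to exploit is that $s^\ast$'s papers carry the largest citation count achievable, so they stay above the h-threshold as the index climbs, whereas every paper $s$ makes in place of a concentrated one carries strictly fewer citations and eventually drops below threshold and becomes dead weight --- so $s$'s h-index stalls --- while $s^\ast$ adds a paper comfortably above the current threshold every year. I would distill this into a \emph{catch-up lemma}: there is $c=c(H_0)$ such that whenever $h^s_y(a)>h^{s^\ast}_y$ there exists $y'\in(y,y+c]$ with $h^{s^\ast}_{y'}>h^s_{y'}(a)$. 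Its proof would compare, for thresholds $t$ near the current index, the number of $a$-papers with $\ge t$ citations produced by $s$ and by $s^\ast$ up to a given year --- bounding the former through the per-year citation budget above and the latter through $h^{s^\ast}_n\ge\floor{n/2}$ --- so as to show that each transient lead of $s$ is of bounded size and is not merely matched but strictly overtaken within $c$ years.

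Granting the catch-up lemma, finish as follows. Either $h^s_n(a)\le h^{s^\ast}_n$ for all large $n$, which is alternative (i); or $h^s_n(a)>h^{s^\ast}_n$ for infinitely many $n$, and then for each such $n$ the catch-up lemma gives $n'\in(n,n+c]$ with $h^{s^\ast}_{n'}>h^s_{n'}(a)$, and these $n'$ are unbounded, so alternative (ii) holds. Either way $h^s_n(a)$ does not overtake $h^{s^\ast}_n$, and $h^s_n(a')$ is treated identically. The main obstacle is the catch-up lemma. One cannot prove it by mimicking Lemma~\ref{thm:collab-1p-overtake}: the claim ``$H^{s^\ast}_y\hprefeq H^s_y(a)$ for all $y$'' is false, because splitting one year's joint potential between two papers can momentarily give $a$ \emph{more} papers at the current h-threshold than the single paper of $s^\ast$, so $h^s_y(a)$ can genuinely exceed $h^{s^\ast}_y$ for a bounded run of years. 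The correct argument is amortized, trading ``lead gained'' against ``dead weight accrued''; pinning down $c$, and handling profiles that alternate $s^\ast$-like and $s^\vDash$-like years, is the delicate part.
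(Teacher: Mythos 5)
Your reduction to $s \in S^{\{\ast,\,\vDash\}}$ (via Lemma~\ref{thm:collab-2p-overtake1} plus transitivity of overtaking, with the case $s'=s^\ast$ handled because no sequence overtakes itself) is sound and in fact more carefully justified than in the paper, and your final dichotomy --- either $h^s_n\le h^{s^\ast}_n$ eventually, or $s$ leads infinitely often and each lead is later strictly reversed --- is a valid way to rule out overtaking. You also correctly diagnose why the domination argument of Lemma~\ref{thm:collab-1p-overtake} cannot be reused: $H^{s^\ast}_y \hprefeq H^s_y$ genuinely fails, since a $\vDash$-year can put two papers at the current threshold and let $s$ reach the next h-index value a year early.

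The gap is that the entire content of the proof now sits inside your ``catch-up lemma,'' which you do not prove, and which as stated is stronger than needed and almost certainly false. You claim a constant $c=c(H_0)$ such that any lead of $s$ is strictly reversed within $c$ years. But consider $s$ that agrees with $s^\ast$ except for a single $\vDash$-year taken when the common h-index is some large $h$: the only effect is that $s$ may reach level $h+1$ one year earlier, and $s^\ast$ first pulls strictly ahead only when level $h+2$ is reached --- which takes $\Theta(h)$ further years, unbounded in $h$. So no constant $c(H_0)$ works; at best you could prove ``there exists \emph{some} later $y'$ with $h^{s^\ast}_{y'}>h^s_{y'}$,'' which is all your final paragraph actually uses, but you give no argument for even that weaker statement beyond a gesture at counting papers above thresholds. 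The paper closes this gap with a global amortized argument rather than a local one: letting $y^s_i$ be the first year $s$ reaches h-index $i$ and $k_i$ the number of deviating years in the window $[y^s_{i-1},y^s_i)$, it shows by induction that $y^{s^\ast}_i - y^s_i \le k_i - \sum_{j<i}k_j$, so $s$ can stay ahead at level $i$ only if $k_i \ge \sum_{j<i}k_j$; sustaining this forces $k_i$ to grow exponentially, whereas $k_i$ is at most the (linearly growing) number of years spent at level $i-1$. Hence $y^{s^\ast}_i < y^s_i$ for infinitely many $i$, giving $\liminf_n\(h^s_n - h^{s^\ast}_n\) < 0$. Some such trade-off between accumulated deviations and the cost of maintaining a lead is what your sketch is missing; without it the proposal is an outline, not a proof.
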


\begin{proof}
Consider a strategy profile $s \neq s^\ast$. Let $s^\vDash$ denote the
strategy profile where each year the players split their research
potential evenly between two joint papers, and let
$S^{\{\ast,\,\vDash\}}$ denote the set of strategy profiles that each
year prescribe either $s^\ast$ or $s^\vDash$. If $s \notin
S^{\{\ast,\,\vDash\}}$ then we are done by
Lemma~\ref{thm:collab-2p-overtake1}, so assume $s \in
S^{\{\ast,\,\vDash\}}$. Let $y^{s^\ast}_i$ denote the first year such
that $h^{s^\ast}_{y_i} \geq i$; let $y^s_i$ denote the first year such
that $h^s_{y_i} \geq i$; and let $k_i$ denote the number of years
$y^s_{i-1} \leq y < y^s_i$ in which $s$ differs from $s^\ast$. It
follows by induction that $y^{s^\ast}_i - y^s_i \leq k_i -
\sum\limits_{j < i} k_j$. In particular, if $k_i < \sum\limits_{j < i}
k_j$, then $y^{s^\ast}_i < y^s_i$, which implies that in year
$y^{s^\ast}_i$ we have $h^{s^\ast} > h^s$. Since the sequence $z_0 =
1,\ z_i = \sum\limits_{j < i} z_j$ grows exponentially yet $k_i$ can
grow at most linearly, this must happen an infinite number of
times. Since $h$ only takes integral values, we have that
$\liminf\limits_{n \to \infty} h^s_n - h^{s^\ast}_n < 0$, and so  $h^s_n$ does not overtake $h^{s^\ast}_n$.
\end{proof}

%\begin{theorem}
%\label{thm:collab-2p-optimal}
%	Consider the two-player AC game of infinite horizon, where $H_0(a) = H_0(a')$. Let $s^\ast$ denote the strategy profile where each year the players invest their research potential into a single joint paper. Then $s^\ast$ is weakly optimal for both players.
%\end{theorem}
%
%\begin{proof}
%This follows directly from Lemma~\ref{thm:collab-2p-overtake2}.
%\end{proof}

\begin{theorem}
\label{thm:collab-2p-equilibrium}
	Consider the two-player AC game of infinite horizon, where $H_0(a) = H_0(a')$. Let $s^\ast$ denote the strategy profile where each year the players invest their research potential into a single joint paper, and let $s^\vDash$ denote the strategy profile where each year the players split their research potential evenly between two joint papers. Let $S^{\{\ast,\,\vDash\}}$ denote the set of strategy profiles that each year prescribe either $s^\ast$ or $s^\vDash$. Then we have the following:
\begin{enumerate*}[(a)]
	\item \label{thm:collab-2p-equilibrium-all} All equilibria must be in $S^{\{\ast,\,\vDash\}}$.
	\item \label{thm:collab-2p-equilibrium-ast} The strategy profile $s^\ast$ is an equilibrium.
%	\item \label{thm:collab-2p-equilibrium-other} There exist equilibria besides $s^\ast$.
	\item \label{thm:collab-2p-equilibrium-not} Not all strategy profiles in $S^{\{\ast,\,\vDash\}}$ are equilibria.
\end{enumerate*}
\end{theorem}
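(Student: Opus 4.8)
The plan is to unwind the definition of a $2$-stable equilibrium and then reduce everything to Lemmas~\ref{thm:collab-2p-overtake1} and~\ref{thm:collab-2p-overtake2} together with Propositions~\ref{thm:collab-2p-asymptotics1} and~\ref{thm:collab-2p-asymptotics2}. Since there are only two players, the only nonempty candidate unstable sets are $\{a\}$, $\{a'\}$, and $\{a,a'\}$, and in each case the complement of the set is either empty or a single player playing $s^\ast$. Throughout I will use that $\Util_n = h_n$, and that every profile in $S^{\{\ast,\,\vDash\}}$ produces only joint papers, so that under such a profile $h_n(a) = h_n(a') = h_n$, which in the case of $s^\ast$ is exactly the quantity $h^{s^\ast}_n$.

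For part~(\ref{thm:collab-2p-equilibrium-all}), I would take any $s \notin S^{\{\ast,\,\vDash\}}$ and apply Lemma~\ref{thm:collab-2p-overtake1} to obtain $s' \in S^{\{\ast,\,\vDash\}}$ with $h^{s'}_n$ overtaking both $h^s_n(a)$ and $h^s_n(a')$. Reading $s'$ as an alternate strategy pair for $A' = \{a,a'\}$ (whose complement is empty) and using $\Util_n = h_n$, this is precisely a witness that $\{a,a'\}$ is unstable under $s$; hence $s$ is not an equilibrium, so every equilibrium lies in $S^{\{\ast,\,\vDash\}}$.

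For part~(\ref{thm:collab-2p-equilibrium-ast}), I would rule out unstable sets of size $1$ and $2$. If $\{a,a'\}$ deviated to an alternate profile $s'$, instability would require $h^{s'}_n(a)$ and $h^{s'}_n(a')$ to both overtake $h^{s^\ast}_n$; if $s' \neq s^\ast$ this contradicts Lemma~\ref{thm:collab-2p-overtake2}, and if $s'$ is outcome-equivalent to $s^\ast$ the utility sequences coincide and nothing overtakes. If a single player, say $a$, deviates to $s'_a$, the resulting profile $s^\ast_{\bar a} \cup s'_a$ is again a strategy profile: if it is outcome-equivalent to $s^\ast$ then $a$'s utility is unchanged, and otherwise Lemma~\ref{thm:collab-2p-overtake2} (applied to this profile) says $h^{s^\ast_{\bar a} \cup s'_a}_n(a)$ does not overtake $h^{s^\ast}_n = \Util^{s^\ast}_n(a)$; the case of $a'$ is symmetric. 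Hence no unstable set exists, so $s^\ast$ is an equilibrium.

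For part~(\ref{thm:collab-2p-equilibrium-not}), I would exhibit $s^\vDash$ itself, which clearly lies in $S^{\{\ast,\,\vDash\}}$, and show $\{a,a'\}$ is unstable under it with alternate profile $s^\ast$. By the proof of Proposition~\ref{thm:collab-2p-asymptotics1} we have $y^{s^\ast}_h \le 2h$, hence $h^{s^\ast}_n \ge n/2 - 1$, while Proposition~\ref{thm:collab-2p-asymptotics2} gives $h^{s^\vDash}_n \sim 2\sqrt n$; therefore $\Util^{s^\ast}_n - \Util^{s^\vDash}_n = h^{s^\ast}_n - h^{s^\vDash}_n \to +\infty$, so this difference has positive $\limsup$ and nonnegative $\liminf$, i.e.\ $\Util^{s^\ast}_n$ overtakes $\Util^{s^\vDash}_n$ for both players. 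Thus $s^\vDash$ is not an equilibrium. The only delicate point — the main obstacle — is the bookkeeping in part~(\ref{thm:collab-2p-equilibrium-ast}): one must correctly dispatch the corner case where a purported deviation does not actually change the outcome (and is therefore excluded by the paper's convention of identifying outcome-equivalent profiles), and one must observe that Lemma~\ref{thm:collab-2p-overtake2}, though phrased for full profiles $s \neq s^\ast$, legitimately applies to the profiles $s^\ast_{\bar a} \cup s'_a$ coming from unilateral deviations. Everything else is a direct invocation of the preceding results.
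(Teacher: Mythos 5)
Your proposal is correct and follows essentially the same route as the paper: part~(\ref{thm:collab-2p-equilibrium-all}) from Lemma~\ref{thm:collab-2p-overtake1}, part~(\ref{thm:collab-2p-equilibrium-ast}) from Lemma~\ref{thm:collab-2p-overtake2}, and part~(\ref{thm:collab-2p-equilibrium-not}) by showing $s^\vDash$ is not an equilibrium via Propositions~\ref{thm:collab-2p-asymptotics1} and~\ref{thm:collab-2p-asymptotics2}. The paper states these reductions in one line, whereas you additionally spell out the case analysis over candidate unstable sets and the handling of outcome-equivalent deviations, which the paper leaves implicit.
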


\begin{proof}
Claims~(\ref{thm:collab-2p-equilibrium-all}) and~(\ref{thm:collab-2p-equilibrium-ast}) follow directly from Lemmas~\ref{thm:collab-2p-overtake1} and~\ref{thm:collab-2p-overtake2}, respectively.
%To prove claim~(\ref{thm:collab-2p-equilibrium-other}), we construct an equilibrium strategy profile as follows:... \todo{Write this up.}
For claim~(\ref{thm:collab-2p-equilibrium-not}), it is sufficient to show that $s^\vDash$ is not an equilibrium, which follows from Propositions~\ref{thm:collab-2p-asymptotics1} and~\ref{thm:collab-2p-asymptotics2} since the players would rather play according to $s^\ast$.
\end{proof}

The most striking consequence of this analysis is that working together, the authors can achieve {\em quadratically} more utility
than working alone. This only holds if they put all their effort into one joint paper; spreading their efforts across two (or more) joint papers is asymptotically no better than solo work.

\subsection{Multi-Player Game}
\label{sec:collab-multigame}

We now look at the AC game with an arbitrary number of players, $A$. For simplicity, we only analyze the case where $(\forall\ a \in A)\ H_0(a) = H_0$, i.e. initially all researchers have the same h-profile; the results can be generalized for arbitrary initial citation profiles.

We consider two variants: the ``static'' AC game, where each player follows the same collaboration strategy each year;
%and can only change the number of papers and the distribution of research potential between them;
and the ``dynamic'' AC game, where new collaborations may be formed and the distribution of research potential may change.
% contractual vs unregulated? static vs dynamic?

We represent the static game as a directed graph, each edge $(a,a')$ labeled with a vector $\vec{\hat{q}_y^{a',a}}$ such that
\begin{itemize*}
	\item $(\forall\ a,a' \in A,\ i \in \mathbb{N}) \quad \vec{\hat{q}_y^{a,a'}}[i] \leq 1$; \quad and
	\item $(\forall\ a \in A) \quad \sum\limits_{i \in \mathbb{N}} \vec{\hat{q}_y^a}[i] + \sum\limits_{a' \neq a} \sum\limits_{i \in \mathbb{N}} \vec{\hat{q}_y^{a,a'}}[i] = 1$.
\end{itemize*}
That is, the graph dictates what fraction of a player's research potential is invested in each collaboration every year.

%\begin{theorem}
%\label{thm:collab-static-overtake}
%	Consider the static multi-player AC game of infinite horizon, where $(\forall\ a \in A)\ H_0(a) = H_0$. Let $S^\ast$ be the set of strategy profiles corresponding to perfect matchings on $A$,\footnote{Note that this set is empty when $\abs{A}$ is odd.} where each year every pair of players in the matching invests their research potential into a single joint paper. Then for any $s^\ast \in S^\ast$ and $s \notin S^\ast$, $h^{s^\ast}_n(a)$ overtakes $h^s_n(a)$ for all $a \in A$.
%\end{theorem}

\begin{theorem}
\label{thm:collab-multi-equilibrium-static}
	Consider the static multi-player AC game of infinite horizon, where we have that $(\forall\ a \in A)\ H_0(a) = H_0$. Let $S^\ast$ be the set of strategy profiles corresponding to perfect matchings on $A$, where each year every pair of players in the matching invests their research potential into a single joint paper.\footnote{\upshape Note that this set is empty when $\abs{A}$ is odd.} Then all of the strategy profiles in $S^\ast$ are equilibria.
%Then the equilibria are exactly the strategy profiles in $S^\ast$.
\end{theorem}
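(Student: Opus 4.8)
The plan is to prove directly that no coalition $A' \subseteq A$ with $\abs{A'} \le 2$ is unstable under a matching profile $s^\ast \in S^\ast$. Exactly as in the earlier propositions, the argument is unaffected by the common starting profile, so I would first reduce to $H_0 = \emptyset$. Fix $s^\ast$, which pairs each player with a partner; by symmetry every player has the same h-index $h^{s^\ast}_y$ in year $y$, and each matched pair only ever coauthors papers with each other, so players outside a given pair are irrelevant to it. I would then split into cases according to $A'$.

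If $A' = \{a\}$, then $a$'s partner $b$ still devotes all of its research potential to the $\{a,b\}$ channel, and no other player contributes to any paper $a$ could join, so $a$ together with $b$ is in an instance of the two-player AC game in which $b$ happens to follow its $s^\ast$ strategy; since Lemma~\ref{thm:collab-2p-overtake2} is a statement about all two-player strategy profiles (including those in which $b$ plays $s^\ast$), any genuine deviation by $a$ leaves $a$ unable to overtake $h^{s^\ast}_n$. If $A' = \{a,a'\}$ with $a$ and $a'$ matched to each other, then likewise no player outside $A'$ contributes to a paper either member joins, so $\{a,a'\}$ is in an instance of the two-player AC game with $H_0(a)=H_0(a')$, and Lemma~\ref{thm:collab-2p-overtake2} again shows no deviation lets either member overtake $h^{s^\ast}_n$. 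In both cases $A'$ is not unstable.

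The remaining and only substantive case is $A' = \{a,a'\}$ where $a$ is matched to some $b$ and $a'$ to some $b'$, with $a,a',b,b'$ distinct (as forced by the matching). Here $b$ and $b'$ keep pouring all of their potential into the $\{a,b\}$ and $\{a',b'\}$ channels, so $a$ may draw ``free'' help from $b$ and $a'$ from $b'$, whereas any potential $a$ sends into a joint paper with $a'$ is subtracted from $a$'s own progress, and vice versa. Writing $t = s^\ast_{\bar{A'}} \cup s'_{A'}$ for an arbitrary deviation, the key claim I would establish is the conservation inequality
$$h^t_y(a) + h^t_y(a') \;\le\; 2\,h^{s^\ast}_y \qquad \text{for every year } y.$$
Granting this, suppose both $h^t_n(a)$ and $h^t_n(a')$ overtook $h^{s^\ast}_n$. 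Since h-indices are integers and $\liminf_n (h^t_n(c)-h^{s^\ast}_n)\ge 0$ for $c\in\{a,a'\}$, we would have $h^t_n(a)\ge h^{s^\ast}_n$ and $h^t_n(a')\ge h^{s^\ast}_n$ for all large $n$; together with the inequality this forces $h^t_n(a)=h^t_n(a')=h^{s^\ast}_n$ eventually, contradicting $\limsup_n(h^t_n(a)-h^{s^\ast}_n)>0$. Hence at least one of $a,a'$ fails to overtake and $A'$ is not unstable, completing the proof.

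The main obstacle is the conservation inequality, which I would attack by a simultaneous induction on $y$ that also controls the committed partners (some bound stating that $h^t_y(b)$ and $h^t_y(b')$ cannot race ahead of $h^{s^\ast}_y$), exploiting that in each year no paper touching $a$ or $a'$ receives more citations than the combined potential of its (at most two) authors, that $a$'s and $a'$'s total yearly contributions are capped by $h^t_{y-1}(a)+1$ and $h^t_{y-1}(a')+1$, and that $s^\ast$ already converts the same aggregate potential into papers of size $2h^{s^\ast}_{y-1}+2$. The delicate point — and the reason a naive ``$s^\ast$ weakly h-dominates the deviation'' invariant fails — is that a player can strictly improve its h-\emph{profile} (piling up extra papers just above its current h-index, at its partner's expense) without yet improving its h-\emph{index}; as in the proof of Lemma~\ref{thm:collab-2p-overtake2}, one therefore has to reason about how many years each player needs to reach h-index $i$ and track the exponential-versus-linear growth of the relevant counts, rather than comparing citation profiles directly year by year.
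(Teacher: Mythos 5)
Your overall skeleton is the same as the paper's: dispose of singleton coalitions and matched pairs by reducing to the two-player game (the paper does this even more tersely, and your use of Lemma~\ref{thm:collab-2p-overtake2} there is fine), and then argue that an unmatched pair $\{a,a'\}$ cannot both overtake. The problem is that the entire weight of the theorem sits on that last case, and your proposed key step --- the conservation inequality $h^t_y(a)+h^t_y(a') \le 2\,h^{s^\ast}_y$ --- is left unproven, and the route you sketch for proving it cannot succeed as described. Every ingredient you list for the induction (no paper receives more citations than its authors' combined potential, yearly budgets capped by $h^t_{y-1}(\cdot)+1$, the efficiency of $s^\ast$) holds verbatim in the \emph{dynamic} game as well. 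But Theorem~\ref{thm:collab-multi-equilibrium-dynamic} exhibits a dynamic deviation by an unmatched pair under which both players overtake $h^{s^\ast}_n$; since h-indices are integers, that forces $h^{s'}_n(a_1)+h^{s'}_n(a_2) > 2\,h^{s^\ast}_n$ infinitely often, i.e., your inequality is \emph{false} for deviations built from exactly the raw materials you intend to use. Any correct proof must therefore invoke, at some identified point, the static hypothesis --- that the deviating pair diverts the \emph{same} fractional allocation away from (and into) the $\{a,a'\}$ channel every year --- and your sketch never says where or how that hypothesis enters. You flag this yourself as ``the main obstacle,'' which is an honest assessment: it is not a technical loose end but the actual content of the theorem.

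Two smaller remarks. First, the deduction you make \emph{from} the conservation inequality (integrality plus $\liminf \ge 0$ forces eventual equality, contradicting $\limsup > 0$) is correct, so if the inequality were established for static deviations the case would close. Second, the paper's own proof is also only a gesture --- it says the unmatched-pair case follows ``by an argument similar to that in the proof of Lemma~\ref{thm:collab-2p-overtake2},'' i.e., by tracking the first year $y_i$ at which each player reaches h-index $i$ and comparing the linearly-bounded number of ``inefficient'' years against the exponentially growing deficit they create. That year-counting argument, unlike a year-by-year profile or sum invariant, is the mechanism that actually exploits the static restriction (a static deviation that ever differs from $s^\ast$ differs in a constant fraction of all years), and it is the piece your proposal would need to supply in place of the conservation inequality.
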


\begin{proof}
%First, consider a strategy profile $s \notin S^\ast$.
%%Suppose there exists a player $a$ who
%%gets less than her fair share of the total research potential, that is, $\sum_{a' \neq a} \sum_i \overline{\vec{q_y^{a',a}}}[i] < 1$.
%%distributes her research potential among more than one collaborator.
%\todo{Fill this in.}

Consider a strategy profile $s^\ast \in S^\ast$.
%Suppose for purpose of contradiction that $s^\ast$ is not an equilibrium; then there exists an unstable set of at most two players.
It is obvious that no player can improve her utility if all other players' strategies remain the same, since joint papers are not possible without cooperation from both players.
%, so there must be an unstable set of exactly two players.
%Let $a'_1$ and $a'_2$ be their corresponding pairs in the matching, respectively.
Consider any strategy profile $s'$ differing from $s^\ast$ only in the strategies of players $a_1$ and $a_2$, so that under $s'$ both $a_1$ and $a_2$ invest a non-zero fraction of their research potential into a joint paper. By an argument similar to that in the proof of Lemma~\ref{thm:collab-2p-overtake2},
%\note{(make more rigorous?)},
it is not possible that both $h^{s'}_n(a_1)$ overtakes $h^{s^\ast}_n(a_1)$ and $h^{s'}_n(a_2)$ overtakes $h^{s^\ast}_n(a_2)$, so by definition $a_1$ and $a_2$ do not form an unstable set. Since this is true for all pairs of players, there does not exist an unstable set of at most two players under $s^\ast$. Thus $s^\ast$ is an equilibrium.
\end{proof}

Next, we consider the same strategy profiles in the dynamic setting, with a very different result.

\begin{theorem}
\label{thm:collab-multi-equilibrium-dynamic}
	Consider the dynamic multi-player AC game of infinite horizon, where we have that $(\forall\ a \in A)\ H_0(a) = H_0$. Let $S^\ast$ be the set of strategy profiles corresponding to perfect matchings on $A$, where each year every pair of players in the matching invests their research potential into a single joint paper. Then for $\abs{A} > 2$, none of the strategy profiles in $S^\ast$ are equilibria.
\end{theorem}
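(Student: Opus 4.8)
Since $\abs{A}>2$ and $S^\ast$ is nonempty only when $\abs{A}$ is even, we have $\abs{A}\ge 4$. Fix any $s^\ast\in S^\ast$ with underlying perfect matching $M$, and pick two players $a_1,a_2$ that are \emph{not} matched to each other; write $b_1=M(a_1)$ and $b_2=M(a_2)$, so $\{b_1,b_2\}\cap\{a_1,a_2\}=\emptyset$. Since every player starts from the same h-profile $H_0$ and every matched pair evolves identically under $s^\ast$, symmetry gives $h^{s^\ast}_n(a_1)=h^{s^\ast}_n(a_2)=h^{s^\ast}_n(b_1)=h^{s^\ast}_n(b_2)=:H_n$ for all $n$, and by Proposition~\ref{thm:collab-2p-asymptotics1} (applied to each matched pair) this sequence grows roughly like $n/2$. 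The plan is to exhibit a deviation $s'_{\{a_1,a_2\}}$, leaving all other players' strategies fixed, that makes $\{a_1,a_2\}$ an unstable set; since $s^\ast\in S^\ast$ is arbitrary, the theorem follows.

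The key observation is that $b_1$ (resp.\ $b_2$) always pours its entire research potential $Q(b_1)=h(b_1)+1$ into the paper it shares with $a_1$ (resp.\ $a_2$), whatever $a_1$ does. So in a given year $a_1$ can retain just \emph{one} unit in that channel --- still receiving a paper with $h(b_1)+2$ citations, coauthored with $b_1$ --- and redirect its remaining $h(a_1)$ units into a joint paper with $a_2$, who does the same, producing a second paper with roughly $2\min(h(a_1),h(a_2))$ citations; this is strictly more output than the single paper $a_1$ gets under $s^\ast$. The deviation I propose has $a_1$ and $a_2$ act exactly as under $s^\ast$ in all but a carefully chosen, increasingly sparse sequence of ``exploitation'' years $y_1<y_2<\cdots$, in which they each apply this trick. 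In the intervening ``normal'' years $a_1$ again invests its full potential in the joint paper with $b_1$, which lets $b_1$'s h-index recover and re-synchronize with $a_1$'s before the next exploitation year --- precisely the manoeuvre a \emph{static} strategy cannot make, which is why $S^\ast$ consists of equilibria in the static game (Theorem~\ref{thm:collab-multi-equilibrium-static}) but not in the dynamic game.

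The analysis then proceeds by an induction on years maintaining that $H^{s'}_y(a_1)\hprefeq H^{s^\ast}_y(a_1)$ (symmetrically for $a_2$, together with a bookkeeping invariant tracking how far $b_1,b_2$ lag), so that in particular $h^{s'}_y(a_i)\ge H_y$ for every $y$. One shows that each exploitation year $y_k$ injects into $a_i$'s h-profile an extra paper whose citation count (roughly $a_i$'s then-current h-index, since $b_i$ is re-synchronized) is large enough to force $h^{s'}_{y_k}(a_i)>H_{y_k}$, and that spacing the $y_k$ far enough apart prevents this surplus from being erased before the next exploitation. Hence $h^{s'}_n(a_i)\ge H_n=h^{s^\ast}_n(a_i)$ for all $n$ with strict inequality for infinitely many $n$, so $h^{s'}_n(a_i)$ overtakes $h^{s^\ast}_n(a_i)$ for both $i=1,2$, making $\{a_1,a_2\}$ unstable under $s^\ast$.

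The main obstacle is this quantitative calibration: choosing the gaps $y_{k+1}-y_k$ large enough that $b_i$'s h-index regains parity with $a_i$'s --- so the ``free'' paper from $b_i$ is worth about $a_i$'s current h-index and therefore genuinely advances it --- while ensuring the invariant $H^{s'}_y(a_i)\hprefeq H^{s^\ast}_y(a_i)$ never breaks. The delicate point is that in an exploitation year the joint paper with $a_j$ carries two \emph{fewer} citations than the paper $a_i$ would have received under $s^\ast$, so one must verify that the extra paper always more than compensates in the h-preference order; this in turn needs control over how the $s^\ast$ h-profile grows (e.g.\ that the number of papers with at least $h$ citations increases by one roughly every other year once the h-index is near $h$), in the spirit of the recursion in the proof of Proposition~\ref{thm:collab-2p-asymptotics1}.
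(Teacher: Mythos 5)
Your high-level plan (exhibit a two-player deviation for an unmatched pair $\{a_1,a_2\}$ that makes them an unstable set) matches the paper, and you correctly identify the exploitable resource: the abandoned partner $b_i$ keeps pouring its full potential into the shared channel, so one unit buys a nearly free paper. But your actual deviation differs from the paper's, and it has a flaw that I do not think any choice of exploitation years can repair. In your construction, after an exploitation year $a_1$ returns to investing its \emph{full} potential with $b_1$. The trouble is that the exploitation permanently damages $b_1$: in that year $b_1$'s joint paper receives only $h+2$ citations instead of $2h+2$, so $b_1$'s h-index --- and hence its potential $Q(b_1)=h(b_1)+1$ --- lags its $s^\ast$ trajectory from then on. Since all of $a_1$'s subsequent papers are joint with $b_1$ and receive $Q(a_1)+Q(b_1)$ citations, $a_1$ inherits this shortfall in \emph{every} later year; it is a recurring drag, not a one-time loss. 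Concretely, with $h_0=0$ and a single exploitation in year $3$, one computes $h^{s'}_8(a_1)=5$ while $h^{s^\ast}_8(a_1)=6$ (the deviator's profile is $\{2,4,4,4,7,9,10,10,10\}$ versus $\{2,4,6,6,8,10,10,12\}$), so $\liminf_{n}\bigl(h^{s'}_n(a_1)-h^{s^\ast}_n(a_1)\bigr)<0$ and overtaking already fails. Sparsifying the exploitation years does not help, since even one exploitation followed by a return to $b_1$ produces this dip. Your ``re-synchronization'' step is the gap: $b_1$ never recovers to its $s^\ast$ trajectory; it only re-synchronizes with the already dragged-down $a_1$.

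The paper's deviation avoids this feedback entirely: from year $3$ onward $a_1$ and $a_2$ \emph{permanently} re-pair with each other (so their potentials stay high, fed by one another), and they skim one unit toward $a'_1,a'_2$ only in two designated years ($3$ and $7$) to harvest the free papers that seed a persistent lead; the damaged partners are never again relied upon for full collaboration, so their degraded potential never propagates back. I would also note that the issue you flag as the ``main obstacle'' --- the two-citation deficit on the big joint paper, which only bites at thresholds near $2h$, far above the current h-index --- is a much smaller effect than the partner-damage feedback above, which bites within a few years. (To be fair, the paper itself leaves its final overtaking verification as ``it can be shown,'' so the quantitative bookkeeping is not fully spelled out there either; but its construction is at least not refuted by a short computation, whereas yours is.)
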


\begin{proof}
Consider a strategy profile $s^\ast \in S^\ast$. Let $a_1$ and $a_2$ be two players who are not paired up in the matching, and let $a'_1$ and $a'_2$ be their matched pairs, respectively. We construct a strategy profile $s'$ as follows: All players besides $a_1$ and $a_2$ follow their respective strategies under $s^\ast$. In years 1 and 2, $a_1$ and $a_2$ follow their strategies under $s^\ast$; in years 3 and 7, they invest one unit of research potential in a joint paper with $a'_1$ and $a'_2$, respectively, and the rest in a single joint paper between themselves; and in all other years $a_1$ and $a_2$ invest all of their research potential in a single joint paper between themselves. It can be shown that $h^{s'}_n(a_1)$ overtakes $h^{s^\ast}_n(a_1)$ and $h^{s'}_n(a_2)$ overtakes $h^{s^\ast}_n(a_2)$. Therefore, $s^\ast$ is not an equilibrium.
\end{proof}

Although stable equilibria exist for the static multi-player AC game, the question of whether there exists an equilibrium for the dynamic version of the game is open. This discrepancy between the static and dynamic models suggests that we should be hesitant in drawing conclusions about real-world collaborative behavior from the static models assumed in the existing literature.

It may also be interesting to analyze the price of anarchy and stability of the AC game, which measure the extent to which researchers acting in their own self-interest benefit the academic community as a whole.

\comments{
\subsection{Summary of desired results}

\todo{The below needs to be updated and integrated into the analysis above.}

Here we examine the price of anarchy and the price of stability for the h-index, Social h-index, and Progressive Social h-index. We find that POA for all three is $k/(n/2)$ for the value of $k$ above, and same for POS for h-index and social h, but for prog-soc-h the POS is 1! :-)

Consider an arbitrarily large set of researchers all with h-index 0. What is the maximum social welfare that can be achieved in $n$ years? Given the two-coauthor limit, the maximum is achieved when researchers pair up and follow the optimal strategy for the two-person game (see Section~\ref{sec:collab-2pgame}), which yields a social welfare of $W(A) = n/2$ after $n$ years.

Suppose that in addition to the novice researchers with h-index 0, there is a single researcher $a$ with h-index $h(a) > 0$. Now we ask the same question: What is the maximum social welfare that can be achieved in $n$ years?

First we note that for each quality point that $a$ invests in a joint paper with $b$, it gives $b$ a jump so that he will reach each consequent h-index value a year earlier. For example, suppose $b$ has exactly $h(b)$ papers with exactly $h(b)$ citations each. Those $h(b)$ papers will not help $b$ reach $h(b)+1$, so $b$ must take an additional $h(b)+1$ years. However, if $a$ had contributed 1 quality point to give one of those $h(b)$ papers $h(b)+1$ citations, only $h(b)$ additional years would be necessary for $b$ to reach an h-index of $h(b)+1$.

In order to achieve social welfare of $k$, there must be at least $k$ researchers with h-index at least $k$. Each of those would normally take $2k$ years to reach h-index $k$, but now we need them to reach it in $n$ years. To do this, $a$ would thus need to invest $2k-n$ quality points in joint work with each of those $k$ researchers, for a total of $k(2k-n)$ quality points. According to our model, $a$ has $h(a)+1$ quality points to distribute per year,\footnote{Hypothetically $h(a)$ could increase over the $n$ years, but as long as $h(a)>k$, this will not happen under the optimal strategy.} so can distribute at most $k(2k-n) \leq n(h(a)+1)$ over the course of the $n$ years. Solving for $k$, we find that the maximum social welfare that can be achieved is:
$$k = \frac{n \pm \sqrt{n^2 + 8n(h(a)+1)}}{4}.$$

\note{Should we also consider running simulations under each of the strategies and computing the POA/POS?}
}

\section{Discussion}
\label{sec:discussion}

In order to facilitate analysis, we made several unrealistic modeling assumptions. For example, we assumed that all citations for a paper are received immediately upon publication. Taking a different citation model, such as constant additional citations per paper per year as in \cite{Hirsch@PNAS05}, a peak-decay model as in \cite{GR@JASIS09}, or that proposed in \cite{WSB@Science13} would complicate analysis but may be more plausible.
%On the other hand, it may be that asymptotically these variations lead to the same behavior.

Our analysis only considered single- or two-authored papers. A more general model could allow multiple coauthors per paper, with conditions to prevent the degenerate solution of all researchers collaborating on a single giant paper. The model could also be extended to allow researchers to enter and retire from academia at different times.

There is also an inherent limitation in modeling human relationships and interactions. The underlying premise that people can be modeled as rational agents is itself subject to debate. Even if we take that to be a reasonable model, there are many more factors at play in the real world of academia -- e.g. geographic location, personal relationships, institutional loyalties, and academic competition -- than can be captured by a simple mathematical model.

\section*{Acknowledgments}

This work was sponsored by the NSF Grant 1101677: ICES: Auctions and Optimizations in Ad Exchanges.

\bibliographystyle{abbrv}
\bibliography{../hindex,../collab}

\end{document}